\newcommand{\N}{\mathbb{N}}
\newcommand{\R}{\mathbb{R}}
\newcommand{\argmax}{\mathop{\mathrm{arg\,max}}}
\newtheorem{definition}{Definition}
\newtheorem{theorem}{Theorem}
\newtheorem{lemma}{Lemma}
\newtheorem{corollary}{Corollary}
\newcommand\norm[1]{\left\lVert#1\right\rVert}
\begin{document}
\raggedbottom

\title{
	\textsf{DeepNC}: Deep Generative Network Completion
}

\author {
       	 Cong Tran,~\IEEEmembership{Student Member,~IEEE}, Won-Yong Shin,~\IEEEmembership{Senior Member,~IEEE}, Andreas Spitz, \\and Michael Gertz%
\IEEEcompsocitemizethanks{\IEEEcompsocthanksitem C. Tran is with the Department of Computer Science and Engineering, Dankook University, Yongin 16890, Republic of Korea, and also with the Machine Intelligence \& Data Science Laboratory, Yonsei University, Seoul 03722, Republic of Korea.\protect\\
E-mail: congtran@ieee.org.
\IEEEcompsocthanksitem W.-Y. Shin is with the School of Mathematics and Computing (Computational Science and Engineering), Yonsei University, Seoul 03722, Republic of Korea.\protect\\
E-mail: wy.shin@yonsei.ac.kr.
\IEEEcompsocthanksitem A. Spitz is with the School of Computer and Communication Sciences, École Polytechnique Fédérale de Lausanne, Lausanne 1015, Switzerland. 
\protect\\
E-mail: andreas.spitz@epfl.ch.
\IEEEcompsocthanksitem M. Gertz is with the Institute of Computer Science, Heidelberg University, Heidelberg 69120, Germany.\protect\\
E-mail: gertz@informatik.uni-heidelberg.de.\\
(Corresponding author: Won-Yong Shin.)}}%

\markboth{IEEE Transactions on on Pattern Analysis and Machine Intelligence}
{Cong {et al.}: \textsf{DeepNC}: Network Completion Based on Deep generative models of graphs}

\IEEEtitleabstractindextext{
\begin{abstract}
Most network data are collected from partially observable networks with both missing nodes and missing edges, for example, due to limited resources and privacy settings specified by users on social media. Thus, it stands to reason that inferring the missing parts of the networks by performing \emph{network completion} should precede downstream applications. However, despite this need, the recovery of missing nodes and edges in such incomplete networks is an insufficiently explored problem due to the modeling difficulty, which is much more challenging than link prediction that only infers missing edges. In this paper, we present DeepNC, a novel method for inferring the missing parts of a network based on a \emph{deep generative} model of graphs. Specifically, our method first learns a likelihood over edges via an \emph{autoregressive generative} model, and then identifies the graph that maximizes the learned likelihood conditioned on the observable graph topology. Moreover, we propose a computationally efficient \textsf{DeepNC} algorithm that {\em consecutively} finds individual nodes that maximize the probability
in each node generation step, as well as an enhanced version using the expectation-maximization algorithm. The runtime complexities of both algorithms are shown to be {\em almost linear} in the number of nodes in the network. We empirically demonstrate the superiority of DeepNC over state-of-the-art network completion approaches.

\end{abstract}

\begin{IEEEkeywords}
Autoregressive generative model; deep generative model of graphs; inference; network completion; partially observable network
\end{IEEEkeywords}}
\maketitle

\IEEEdisplaynotcompsoctitleabstractindextext

%
\IEEEpeerreviewmaketitle

\section{Introduction}\label{sec:1}
\subsection{Background and Motivation}\label{sec:1a}
Real-world networks extracted from various biological, social, technological, and information systems tend to be only partially observable with missing both nodes and edges~\cite{kossinets2006effects}. 
For example, users and organizations may have limited access to data due to insufficient resources or a lack of authority. In social networks, a source of  incompleteness stems from privacy settings specified by users who partially or completely hide their identities and/or friendships~\cite{acquisti2015privacy}. As an example, consider a demographic analysis of Facebook users in New York City in June 2011 that showed 52.6\% of the users to be hiding their Facebook friends~\cite{dey2012facebook}. Using such incomplete network data may severely degrade the performance of downstream analyses such as community detection, link prediction, and node classification due to significantly altered estimates of structural properties (see, e.g.,~\cite{kossinets2006effects,koskinen2013bayesian,kronem,kromfac}  and references therein).

This motivates us to conduct {\em network completion} to infer the missing part (i.e., a set of both missing nodes and associated edges), prior to performing downstream applications. While intuitively similar, network completion is a much more challenging task than the well-studied {\em link prediction} and {\em low-rank matrix completion}, since it {\em jointly infers both missing nodes and edges}, while link prediction and matrix completion only infer missing edges. Although one prior contribution has attempted to address the recovery of missing nodes and edges with an algorithm, dubbed KronEM, that infers the missing parts of a graph based on the Kronecker graph model~\cite{kronem}, this current state-of-the-art model suffers from three major problems: 1) setting the size of a Kronecker generative parameter is not trivial; 2) the Kronecker graph model is inherently  designed under the assumption of a pure power-law degree distribution that not all real-world networks necessarily follow; and  3) its inference accuracy is not satisfactory.
 
As a way of further enhancing the performance of network completion, our study is intuitively motivated by the existence of {\em structurally similar} graphs with respect to graph distance, whose topologies are almost entirely observable.\footnote{In the following, we use the terms ``network" and ``graph" interchangeably.} Such similar graphs can be retrieved  from the same domain as that of the target graph (see~\cite{citeseer,protein,data_facebook} for more information). As an example, suppose that  many citizens residing in country A strongly protect the privacy of their social relationships, while citizens of country B tend to provide their friendship relations on social media. Intuitively, as long as the graph structures between two countries are similar to each other, latent information  within the (almost) complete data collected from country B can be uncovered and leveraged to infer the missing part of the collected data from country A. Additionally, the use of deep learning on graphs has been actively studied by exploiting this structural similarity of graphs (see, e.g.,~\cite{graphRNN,netgan} and references therein), which enables us to model complex structures over graphs with high accuracy. For example, the framework of recurrent neural networks (RNN) and generative adversarial networks (GAN) were recently introduced to construct deep generative models of graphs~\cite{graphRNN,netgan}. Thus, a natural question is how such {\em structural similarity} can be incorporated into the  problem of network completion by taking advantage of effective deep learning-based approaches. 

\subsection{Main Contributions}\label{sec:1b}
In this paper, we introduce \textsf{DeepNC},  a novel method for completing the missing part of an observed incomplete network $G_O$ based on a {\em deep generative} model of graphs. Specifically, we first learn a likelihood over edges (i.e., a latent representation) via an {\em autoregressive generative} model of graphs, e.g., GraphRNN~\cite{graphRNN} built upon RNN, by using a set of structurally similar graphs as training data, and then infer the missing part of the network. Unlike GraphRNN, which is only applicable to fully observable graphs, our method is capable of accommodating both observable and missing parts by imputing a number of missing nodes and edges with {\em sampled} values from a multivariate Bernoulli distribution. To this end, we formulate a new optimization problem with the aim of finding the graph that maximizes the learned likelihood conditioned on the observable graph topology. 
To efficiently solve the problem, we first propose a low-complexity \textsf{DeepNC} algorithm, termed \textsf{DeepNC-L}, that {\em consecutively} finds a single node maximizing the probability in each node generation step in a greedy fashion under the assumption that there are no missing edges between two nodes in a partially observable network $G_O$. We then present judicious approximation and computational reduction techniques to \textsf{DeepNC-L} by exploiting the {\em sparseness} of real-world networks. Second, by relaxing this assumption to deal with a more realistic scenario in which there are missing edges in $G_O$, we propose an enhanced version of \textsf{DeepNC} using the expectation-maximization (EM) algorithm, termed \textsf{DeepNC-EM}, which enables us to jointly find both missing edges between nodes in $G_O$ and edges associated with missing nodes by executing \textsf{DeepNC-L} iteratively. That is, the \textsf{DeepNC-EM} algorithm jointly solves network completion and link prediction in a single module. We show that the computational complexity of both \textsf{DeepNC} algorithms is {\em almost linear} in the number of nodes in the network. By adopting the graph edit distance (GED)~\cite{ged} as a performance metric, we empirically evaluate the performance of both \textsf{DeepNC} algorithms for various environments. Experimental results show that our   algorithms consistently outperform state-of-the-art network completion approaches by up to 68.25\% in terms of GED.  The results also demonstrate the robustness of our method not only on various real-world networks that do not necessarily follow a power-law degree distribution, but also in three more  difficult  and challenging situtations  where 1) a large portion of nodes are missing, 2) training graphs are only partially observed, and 3) a large portion of edges between  nodes  in $G_O$ are missing. Additionally, we analyze and empirically validate the computational complexity of \textsf{DeepNC} algorithms. Our main contributions are five-fold and summarized as follows: 
\begin{itemize}
\item We introduce \textsf{DeepNC}, a deep learning-based network completion method for partially observable networks; 
\item We formalize our problem as the imputation of missing data in an optimization problem that maximizes the conditional probability of a generated node sequence;
\item We design two computationally efficient \textsf{DeepNC} algorithms to solve the problem  by exploiting the sparsity of networks;\footnote{The source code used in this paper is available online (https://github.com/congasix/DeepNC).} 
\item We validate \textsf{DeepNC} through extensive experiments using real-world datasets  across various domains, as well as synthetic datasets;
\item We analyze and empirically validate the computational complexity of \textsf{DeepNC}.
\end{itemize}
To the best of our knowledge, this study is the first work that applies deep learning to network completion.

\subsection{Organization and Notations}\label{sec:1c}
The remainder of this paper is organized as follows. In Section~\ref{sec:2}, we summarize significant studies that are related to our work. In Section~\ref{sec:3}, we explain the methodology of our work, including the problem definition and an overview of our \textsf{DeepNC} method. Section~\ref{sec:4}  describes implementation details of the two \textsf{DeepNC} algorithms and analyzes their computational complexities. Experimental results are discussed in Section~\ref{sec:5}. Finally, we provide a summary and concluding remarks in Section~\ref{sec:6}. 

Table~\ref{tab:notation} summarizes the notation that is used in this paper. This notation will be formally defined in the following sections when we introduce our methodology and the technical details.

\begin{table}[t!]
    \centering
    \caption{Summary of notations}
    \label{tab:notation}

    \begin{tabular}{|l|p{6cm}|}
        \hline
        \multicolumn{1}{|c|}{\textbf{Notation}}&\multicolumn{1}{c|}{\textbf{Description}}                                                                                                \\ \hline
$G_T$ & true graph      \\ \hline
$G_O$ & partially observable graph      \\ \hline
$V_O$ & set of nodes in $G_O$      \\ \hline
$E_O$ & set of edges in $G_O$      \\ \hline
$V_M$ & set of missing nodes      \\ \hline
$E_M$ & set of missing edges      \\ \hline
$G_I$ & training graph      \\ \hline
$p_\text{model}$ & probability distribution over edges of a graph      \\ \hline
$\Theta$ & learned parameters of $p_\text{model}$ \\ \hline
$\hat{G}$ & recovered graph      \\ \hline
$\pi$ & node ordering      \\ \hline
${\bf S}^\pi$ & a sequence of nodes and edges under $\pi$\\ \hline
\end{tabular}

\end{table}

\section{Related Work}\label{sec:2}

The method that we propose in this paper is related to four broader areas of research, namely generative models of graphs, link prediction, low-rank matrix completion, and network completion.

\textbf{Generative models of graphs.} The study of generative models of graphs has a long history, beginning with the first random model of graphs that robustly assigns probabilities to large classes of graphs, and was introduced by Erd\H{o}s and R{\'e}nyi~\cite{data_com}. Another well-known model generates new nodes based on preferential attachment~\cite{data_ba}. More recently, a generative model based on Kronecker graphs, the so-called KronFit~\cite{kronfit}, was introduced, which generates synthetic networks that match many of the structural properties of real-world networks such as constant and shrinking diameters. Recent advances in {\em deep learning}-based approaches have made further progress towards generative models for complex networks~\cite{graphRNN, grans, netgan, graphvae, graphvae2, gcpn, miscgan, GN}. GraphRNN~\cite{graphRNN} and graph recurrent attention networks (GRAN)~\cite{grans} were presented to learn a distribution over edges by decomposing the graph generation process into sequences of node and edge formations via autoregressive generative models; an approach using the Wasserstein GAN objective in the training process was applied to generate discrete output samples~\cite{netgan}; variational autoencoders (VAEs) were employed to design another deep learning-based generative model of graphs~\cite{graphvae,graphvae2}; a graph convolutional policy network was presented for goal-directed graph generation (e.g., drug molecules) using reinforcement learning~\cite{gcpn}; a multi-scale graph generative model, named Misc-GAN, was introduced by modeling the underlying distribution of graph structures at different levels of granularity to aim at generating graphs having similar community structures~\cite{miscgan}; and a more general deep generative model was presented to learn distributions over any arbitrary graph via graph neural networks~\cite{GN}. Among the aforementioned methods, autoregressive generative models such as GraphRNN and GRAN are the most scalable and flexible approaches in terms of graph size, while others are beneficial in generating non-topological information such as node attributes. Table~\ref{tab:literature} summarizes the literature overview of the aforementioned deep generative models of graphs.

\begin{table*}[t]
\centering
\caption{Summary of deep generative models of graphs}
\label{tab:literature}
\centering
\begin{tabular}{l c c c }
\hline
Deep generative models of graphs                                             & Scalable      & Flexible  &   Attributed  \\ \hline
Autoregressive \cite{graphRNN,grans}  &  \checkmark      & \checkmark                  &                                                     \\ 
GAN \cite{netgan,miscgan}                  & \checkmark  &  &                                 \\
VAE \cite{graphvae,graphvae2}                  &    &  & \checkmark                                   \\ 
Reinforcement learning \cite{gcpn}                     &    &  \checkmark    &  \checkmark     \\ 
General neural network \cite{GN}                       &                  &  \checkmark &  \checkmark   \\ \hline
\end{tabular}
\end{table*}

{\bf Link prediction.} Inferring the presence of links in a given network according to the neighborhood similarity of existing connections is a longstanding task in network science. Although numerous algorithms have been developed based on traditional statistical measures~\cite{lp1} and deep learning such as graph neural networks~\cite{graphvae2,linkpred2}, existing link prediction methods are not inherently designed to solve the network completion problem that jointly recovers missing nodes and edges in partially observable networks. Specifically, when a node is completely missing from the underlying network, link prediction models can no longer exploit structural neighborhood information.

{\bf Low-rank matrix completion.} Missing entries in a low-rank matrix due to partial observations can be inferred by solving
the rank minimization problem using approximation methods such as singular value decomposition~\cite{SVDlowrank}, matrix factorization~\cite{MFlowrank}, neural networks~\cite{CNNlowrank}, and adaptive clustering of bandit strategies~\cite{onlinebandit1,onlinebandit2}. Since many graphs tend to exhibit low-rank connectivity structures~\cite{lowrank}, several techniques used in matrix completion can also be applied to perform link prediction~\cite{linkMF}. Similarly as in the setting of link prediction, low-rank matrix completion requires at least one entry in each row and column to be known in order to infer missing entries.

{\bf Network completion.} Observing a partial sample of a network and inferring the remainder of the network is referred to as {\em network completion}. As the most influential prior work, KronEM, an approach based on Kronecker graphs to solving the network completion problem by applying the EM algorithm, was suggested by Kim and Leskovec \cite{kronem}. MISC was developed  to tackle the {\em missing node identification} problem when the information of connections between missing nodes and observable nodes is assumed to be available~\cite{misc}. A follow-up study of MISC~\cite{sami} incorporated metadata such as demographic information and the nodes’ historical behavior into the inference process. Furthermore, a graph upscaling method, termed EvoGraph~\cite{evograph}, can be regarded as a network completion method using a preferential attachment mechanism. 

{\bf Discussion.} Despite these contributions, no prior work in the literature exploits the power of deep generative models in the context of network completion. Although generative models of graphs such as GraphRNN can be used as a network completion method, nontrivial extra tasks are required, including computationally expensive {\em graph matching} to find the correspondence between generated graphs and the partially observable network. Furthermore, MISC and other follow-up studies do not truly address network completion,
since they solve the {\em node identification} problem under the assumption that the connections between missing nodes and observable nodes are known beforehand, which is not feasible in a setting where only partial observation of nodes is possible as we address with \textsf{DeepNC} in the following.

\section{Methodology}\label{sec:3}

As a basis for the proposed \textsf{DeepNC} algorithm in Section 4, we first describe our network model with basic assumptions and formalize the problem. Then, we explain  a deep  generative graph model and our research methodology adopting the  deep generative graph model to solve the problem of network completion.
\subsection{Problem Definition}
\label{sec:3a}
\subsubsection{Network Model and Basic Assumptions}
\label{sec:3a1}
Let us denote a partially observable network as $G_O = (V_O,E_O)$, where $V_O$ and $E_O$ are the set of vertices and the set of edges, respectively. The network $G_O$ with $|V_O|$ observable nodes can be interpreted as a subgraph taken from an underlying true network $G_T = (V_O~\cup~V_M,~E_O~\cup~E_M)$, where $V_M$ is the set of unobservable (missing) nodes and $E_M$ is the set of three types of unobservable (missing) edges, including i) the edges connecting two nodes in $V_M$, ii) the edges connecting one node in $V_O$ and another node in $V_M$, and iii) the missing edges connecting two nodes in $V_O$.  More specifically, the set of observable edges, $E_O$, is regarded as a subset of all true edges connecting nodes in $V_O$. In contrast to the conventional setting that assumes no missing edges between two nodes in $V_O$~\cite{kronem}, we relax this assumption by not requiring that $G_O$ is a complete subgraph. In the following, we assume both $G_O$ and $G_T$ to be {\em undirected unweighted} networks without self-loops or repeated edges.

Let us denote $p_\text{model}$ as a family of probability distributions over the edges of a graph, which can be parameterized by a set of model parameters $\Theta$, i.e., $({p}_\text{model}^\Theta; \Theta)$.\footnote{To simplify notations, $p_\text{model}^{\Theta}$ will be written as $p_\text{model}$ if omitting $\Theta$ does not cause any confusion.} In this paper, we assume that $G_T$ is a sample drawn from the distribution $p_\text{model}$. Furthermore, we assume that the number of missing nodes, $|V_M|$, is available or can be estimated. In practice, $|V_M|$ can be readily estimated by standard statistical methods; for example,  a latent non-random mixing model  in~\cite{estimatepopulation} is capable of estimating a network size $|V_O \cup V_M|$ by asking respondents how many people they know in specific subpopulations. For an overview of network-relevant notations, see Fig.~\ref{fig:overall}.

\subsubsection{Problem Formulation}
\begin{figure}[t]
    \begin{center}
            \includegraphics[width=\columnwidth]{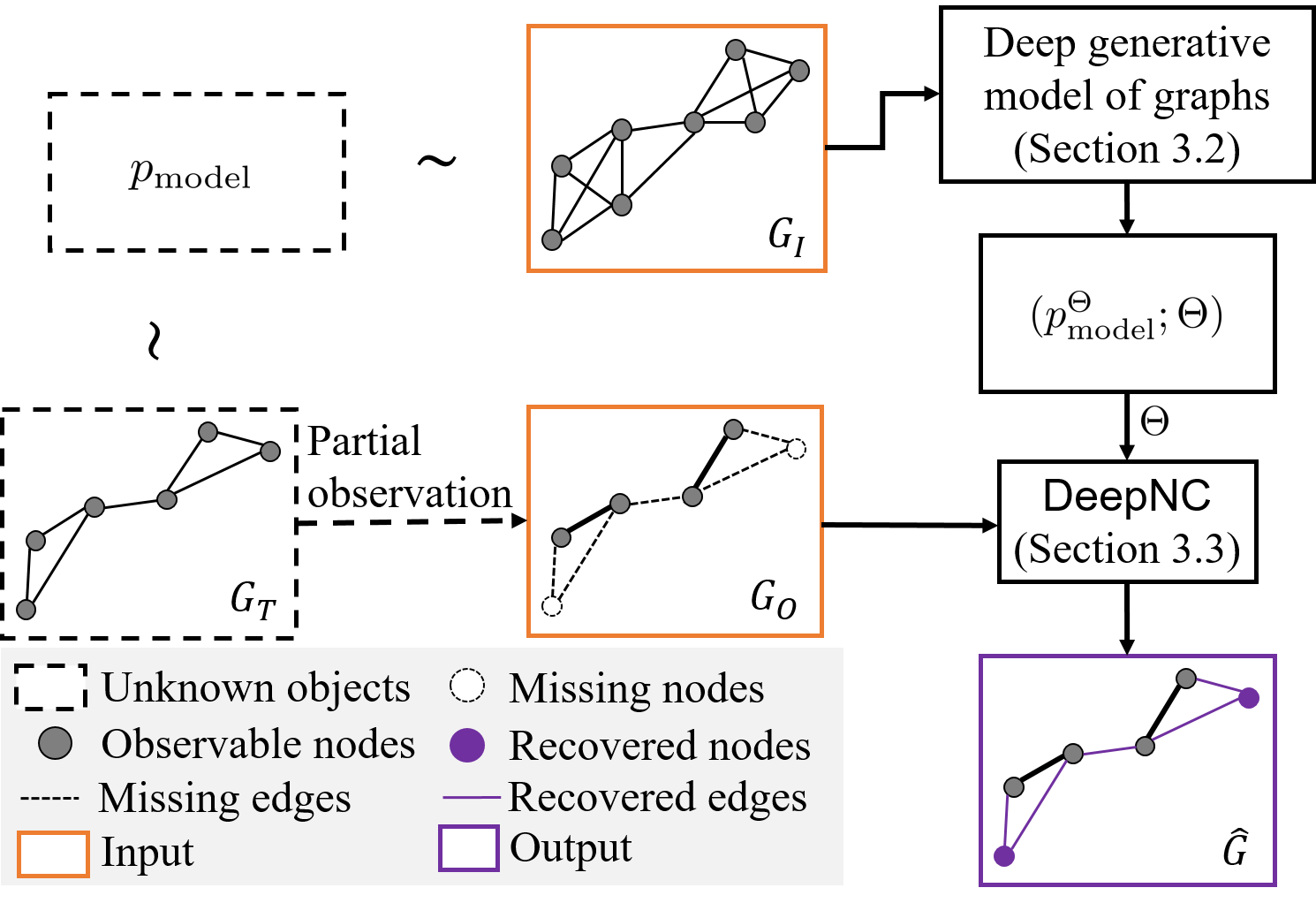}
            \caption{The schematic overview of our \textsf{DeepNC} method.}
            \label{fig:overall}
    \vspace{-0.4cm}
    \end{center}
\end{figure}

In the following, we formally define the network completion problem, the idea behind our approach, and the problem formulation. 

\begin{definition}
{\upshape \textbf{Network completion problem.}} Given a partially observable network $G_O$, network completion aims to recover all missing edges connecting nodes in the true network $G_T$ so that the inferred network, denoted by $\hat{G}$, is equivalent to $G_T$ (up to isomorphism).
\end{definition}
As illustrated in Fig.~\ref{fig:overall}, a network $\hat{G}$ is inferred using the partially observable network $G_O$  as input of \textsf{DeepNC}. We tackle this problem by minimizing a distance metric $\delta(G_T,\hat{G})$ that measures the difference between $G_T$ and $\hat{G}$.
Due to the fact that the true network $G_T$ is not available, our intuition is to analyze the connectivity patterns of one (or multiple) fully observed network(s) $G_I$ whose structure is similar to that of $G_T$ (i.e., $\delta(G_T,G_I)$ is sufficiently small) and then to make use of this information for recovering the network $G_O$, where $G_I$ is a sample drawn from the distribution $p_\text{model}$.\footnote{The number of nodes in $G_I$ should be greater than or equal to that in $G_T$ so that the information  (i.e., the distribution $p_\text{model}$) encoded by learned parameters $\Theta$ is sufficient to infer $G_T$.} To this end, we first learn $({p}_\text{model}; \Theta)$ by using $G_I$ as the {\em training} data under a deep generative model of graphs described in Section~\ref{sec:3b}. Afterwards, we generate graphs with similar structures via the set of learned model parameters $\Theta$. Among all generated graphs $G \in \mathcal{G}$ having $|V_O|+|V_M|$ nodes and containing a subgraph isomorphic to $G_O$, we find the most likely graph configuration $\hat{G}$ from the distribution over graphs in the set $\mathcal{G}$ parametrized by $\Theta$. In this context, our optimization problem can be formulated as follows:
\begin{equation}
\label{eq:probform}
\begin{aligned}
&\hat{G} = \argmax_{G \in\mathcal{G}} p(G|G_O;\Theta) \\
&\text{s.t. } |V_G| = |V_O|+|V_M|,
\end{aligned}
\end{equation}
where $|V_G|$ denotes the number of nodes in $G$.
The overall procedure of our approach is visualized in Fig.~\ref{fig:overall}. 

\subsection{Deep Generative Model of Graphs}\label{sec:3b}

Deep generative models of graphs have the ability to approximate any distribution of graphs with minimal assumptions about their structures~\cite{graphRNN,GN}. Among recently introduced deep generative models, we adopt GraphRNN~\cite{graphRNN} in our study due to its state-of-the-art performance in generating diverse graphs that match the structural characteristics of a target set as well as the scalability to much larger graphs than those from other deep generative  models (refer to Section 4 and Corollary 1 in~\cite{graphRNN} for more details). In this subsection, we briefly describe a variant of GraphRNN, termed simplified GraphRNN (GraphRNN-S), where the probability of edge connections for a node is assumed to be independent of each other. This method effectively learns $({p}_\text{model}; \Theta)$ from the set of {\em structurally similar} network(s) $G_I$. 

We first describe how to vectorize a graph. Given a graph $G$ sampled from  the distribution $p_\text{model}$ with a number of nodes equal to $|V_O|+|V_M|$, we define a node ordering $\pi$ that maps nodes to rows or columns of a given adjacency matrix of $G$ as a permutation function over the set of nodes. Thus, \{$\pi(v_1),\cdots,\pi(v_{|V_O|+|V_M|})$\} is a permutation of \{$v_1,\cdots,v_{|V_O|+|V_M|}$\}, yielding $(|V_O|+|V_M|)!$ possible node permutations. 
Given a node ordering $\pi$, a sequence ${\bf{S}}$ is then defined as:
\begin{equation}
\label{eq:1}
{\bf{S}}^\pi \triangleq ({\bf{S}}_1^\pi, \cdots, {\bf{S}}_{|V_O|+|V_M|}^\pi),
\end{equation}
where each element ${\bf{S}}^\pi_i \in \{0,1\}^{i-1}$ for $i \in \{2,\cdots,|V_O|+|V_M|\}$ is a binary adjacency vector representing the edges between node $\pi(v_i)$ and the previous nodes $\pi(v_j)$  for $j \in \{1,\cdots,i-1\} $ that already exist in the graph, and ${\bf{S}}^\pi_1 = \varnothing$. Here, ${\bf S}_i^\pi$ can be expressed as
\begin{equation}
\label{eq:2}
{\bf{S}}^{\pi}_i = ({a}^\pi_{1,i}, \cdots,{a}^\pi_{i-1,i}),\quad \forall i \in \{2,\cdots,|V_O|+|V_M|\},
\end{equation}
where ${a}_{u,v}^\pi$ denotes the $(u,v)$-th element of the adjacency matrix ${\bf {A}}^{\pi} \in \{0,1\}^{(|V_O|+|V_M|) \times (|V_O|+|V_M|)}$ for $u, v\in\{1,\cdots,|V_O|+|V_M|\}$ (refer to Fig.~\ref{fig:grnnmethodology} for an illustration of the sequence). Due to the fact that the graphs are discrete objects, the graph generation process involves discrete decisions that are not differentiable and therefore problematic for backpropagation. Thus, instead of directly learning the distribution $p_\text{model}$, we sample $\pi$ from the set of $(|V_O|+|V_M|)!$ node permutations to generate the sequences ${\bf {S}}^\pi$ and learn the distribution $p({\bf {S}}^\pi)$ over sequences.

Next, we explain how to characterize the distribution $p({\bf {S}}^\pi)$. Due to the sequential nature of ${\bf{S}}^\pi$, the distribution $p({\bf{S}}^\pi)$  for a given node ordering $\pi$ can be decomposed into the product of conditional probability distributions over the elements as follows:
\begin{equation}
\label{eq:3}
p({\bf{S}}^{\pi}) = \prod^{|V_O|+|V_M|}_{i=2}p({\bf{S}}^\pi_i|{\bf{S}}^\pi_1,\cdots,{\bf{S}}^\pi_{i-1}).
\end{equation}
For ease of notation, we simplify $p({\bf{S}}^\pi_i|{\bf{S}}^\pi_1,\cdots,{\bf{S}}^\pi_{i-1})$ as $p({\bf{S}}^\pi_i|{\bf{S}}^\pi_{<i})$ for the remainder of the paper. 

\begin{figure}[t]
    \begin{center}
            \includegraphics[width=9cm]{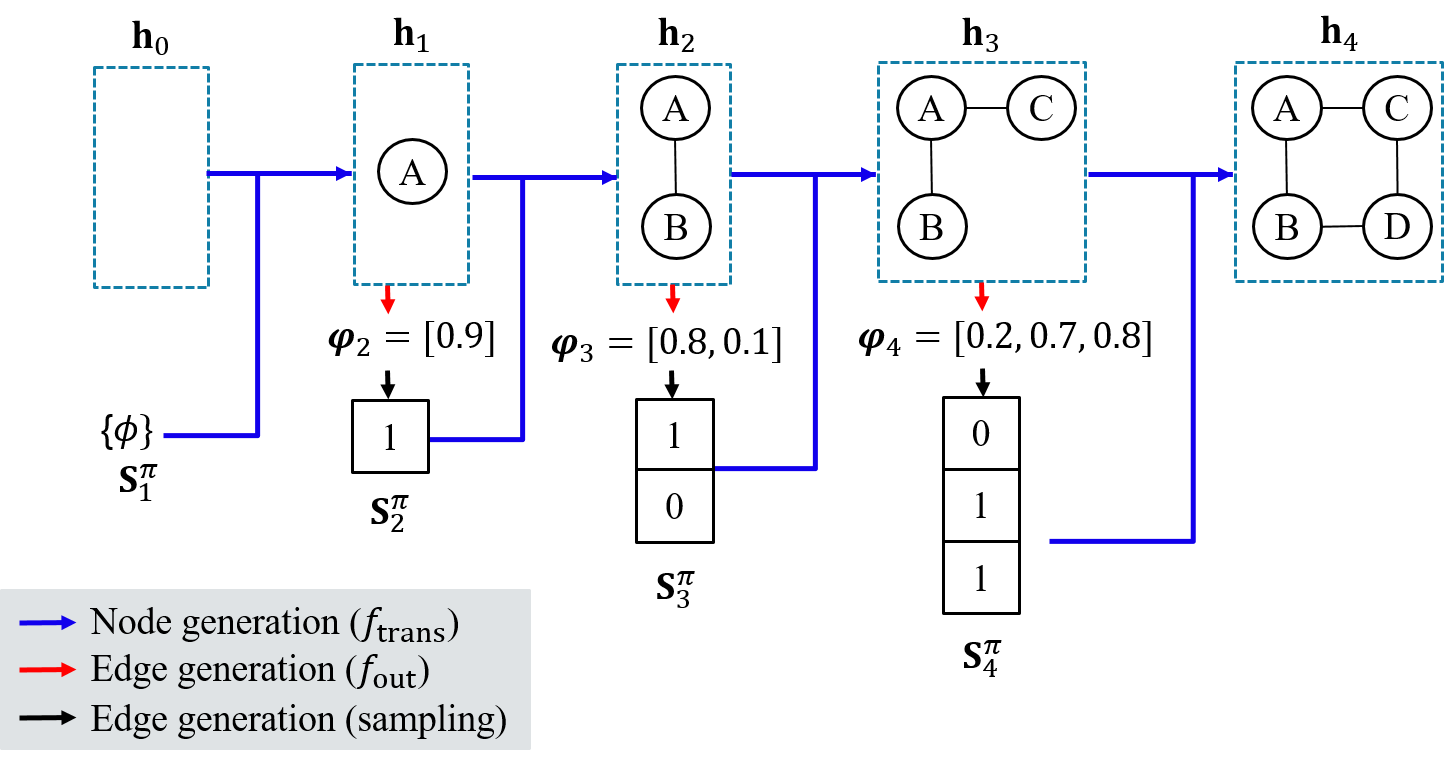}
            \caption{An example illustrating the inference process of GraphRNN-S. Here, the blue arrows denote the graph-level RNN that encodes the ``graph state'' vector ${\bf h}_i$ in its hidden state, and the red and black arrows represent the edge generation process whose input is given by the graph-level RNN.}
            \label{fig:grnnmethodology}
    \vspace{-0.4cm}
    \end{center}
\end{figure}

Now, we turn to describing the use of RNN in generating a sequence ${\bf {S}}^\pi$ from the training data $G_I$. The core idea is 
to learn two functions $f_\text{trans}$ and $f_\text{out}$ that are used in each generation step according to the following procedure (refer to Fig.~\ref{fig:grnnmethodology}).  We denote ${\bf h}_i \in \R^d$ as the graph state vector representing the hidden state of the model in the $i$-th step, where $d \in \N$ is a user-defined parameter that is typically set to a value smaller than $|V_O|+|V_M|$. A state-transition function $f_\text{trans}$ is used to compute the graph state vector ${\bf h}_i$ based on both the previous hidden state ${\bf h}_{i-1}$ and the input ${\bf{S}}^\pi_{i}$, and is given by
\begin{equation}
\label{eq:5a}
{\bf h}_i = f_\text{trans}({\bf h}_{i-1}, {\bf{S}}^\pi_i).
\end{equation}
Intuitively, ${\bf h}_i$ encodes the topological information of $i$ generated nodes in a low-dimensional vector. For the first generation step, we randomly initialize ${\bf h}_0$ and set ${\bf{S}}^\pi_1 = \varnothing$ to produce ${\bf h}_1$. Then, as the output of the $i$-th step of GraphRNN-S,  an output function $f_\text{out}$ is invoked to obtain a vector ${\bm \varphi}_{i+1} \in (0,1)^{i}$  specifying the distribution of the next node's adjacency vector as
\begin{equation}
\label{eq:5}
{\bm \varphi}_{i+1} = f_\text{out}({\bf h}_{i}). 
\end{equation}
In GraphRNN-S, $p({\bf{S}}^\pi_i|{\bf{S}}^\pi_{<i})$ is modeled as a multivariate Bernoulli distribution parametrized by ${\bm \varphi_i}$. Thus, every entry of ${\bm \varphi_i}$ in (\ref{eq:5}) can be interpreted as a probability representing whether there exists an edge between nodes $i$ and $j$ for $j \in \{1,\cdots,i-1\}$. The function $f_\text{trans}$ is found via general neural networks such as gated recurrent units (GRUs)~\cite{GRU} or long short-term memory (LSTM) units~\cite{LSTM} in RNN, and the function $f_\text{out}$ is a multilayer perceptron. The weights of  $f_\text{trans}$ and $f_\text{out}$ are optimized using training sequences sampled from $G_I$ (refer to~\cite{graphRNN} for further details on the training process). It is worth noting that, rather than learning to generate graphs under any possible node permutations, GraphRNN-S learns from samples generated via breadth-first search (BFS) to allow the training process to be tractable.

A set of model parameters $\Theta$ is referred to as learned weights of both $f_\text{trans}$ and $f_\text{out}$ after the training process. Fig.~\ref{fig:grnnmethodology} illustrates the inference process of GraphRNN-S, where a graph consisting of four nodes is generated as depicted from left to right. In more detail, after obtaining $\varphi_2$ via (\ref{eq:5a}) and (\ref{eq:5}), ${\bf{S}}^\pi_2 = [1]$ is acquired by sampling from the multivariate Bernoulli distribution parameterized by ${\bm \varphi}_2$, which means that the next generated node (i.e., node B) is linked to node A. Following a similar procedure, we obtain ${\bf{S}}^\pi_3 = [1,0]$ and ${\bf{S}}^\pi_4 = [0,1,1]$ representing the connections of nodes C and D with previously generated nodes, respectively.

\begin{figure*}[th]
    \begin{center}
            \includegraphics[width=0.8\linewidth]{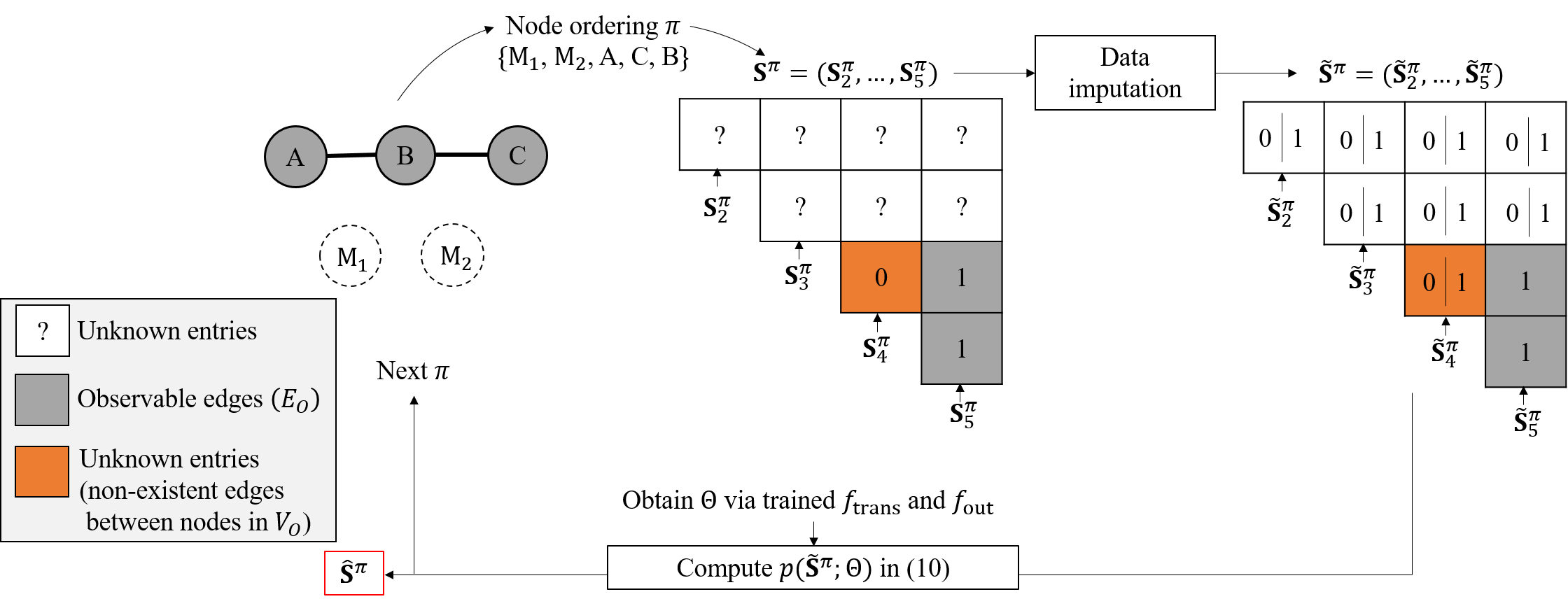}

            \caption{An example illustrating the schematic overview of our \textsf{DeepNC} method, where three nodes (i.e., A, B, and C)  and two edges with solid lines are observable instead of the true graph ${G}_T$ consisting of five nodes and all associated edges. Both white and orange entries in ${\bf S}^\pi$ are imputed with either 0 or 1 while grey entries  in ${\bf S}^\pi$ remain unchanged.}
            \label{fig:methodology}
    \end{center}
\end{figure*}

\subsection{Network Completion} \label{sec:3c}

In this subsection, we present our \textsf{DeepNC} method that recovers the missing part of the true network $G_T$ based on the deep generative model. We first  describe the approach that seamlessly accommodates both observable and missing parts of $G_T$ into the graph generation process using the  trained functions $f_\text{trans}$ and $f_\text{out}$ in Section~\ref{sec:3b}. Then, we present the problem reformulation built upon (\ref{eq:probform}).

We start by modeling the graphs that we want to recover as sequences and incorporating the information from the observed graph $G_O$ into the generation process. To this end, we reuse the notation ${\bf{S}}^\pi$ in (\ref{eq:1}) so that the sequence accounts for both observable and missing parts, where indices of missing nodes correspond to placeholders (e.g., M$_1$ and M$_2$ in Fig.~\ref{fig:methodology}), if such inclusion of unknown entries in ${\bf S}^\pi$ does not cause any confusion.
Then, we solve (\ref{eq:probform}) through {\em data imputation} of the unknown entries (i.e., the entries associated with missing nodes), which also include non-existent edges between nodes in $G_O$. Let  
\begin{align}
\label{eq:5b}
{\bf \tilde{S}}^\pi=(\tilde{\bf S}_1^\pi,\cdots,\tilde{\bf S}_{|V_O|+|V_M|}^\pi) 
\end{align} 
denote the sequence that we obtain from data imputation under a node ordering $\pi$, which contains both observable edges taken directly from ${\bf{S}}^\pi$, corresponding to the set $E_O$, and possible instances of all missing entries.
Then, we impute each missing entry in ${\bf S}^\pi$ with either 0 or 1, thereby yielding $2^{\frac{(|V_O|+|V_M|)(|V_O|+|V_M|-1)}{2}-|E_O|}$ possible outcomes of ${\bf \tilde{S}}^\pi$, where data imputation for non-existent edges between nodes in $G_O$  (i.e., orange entries in ${\bf S}^\pi$ of Fig.~\ref{fig:methodology}) can be thought of as {\em link prediction} since structural neighborhood information regarding observable nodes is available. For each outcome, we use trained functions $f_\text{trans}$ and $f_\text{out}$ to obtain the corresponding ${\bm \varphi_i}$ for $i\in\{2,\cdots,|V_O|+|V_M|\}$. 

Next, since the constraint in (\ref{eq:probform}) is incorporated into ${\bf \tilde{S}}^\pi$ in (\ref{eq:5b}), we reformulate our optimization problem in (\ref{eq:probform}) as finding a sequence $\hat{\bf{S}}^{{\pi}}$ that maximizes the probability $p({\bf \tilde{S}}^\pi;\Theta)$ under a node ordering ${\pi}$ from a distribution of sequences parametrized by $\Theta$ as follows:
\begin{equation}
\label{eq:6}
\hat{\bf{S}}^{{\pi}}=\argmax_{{\bf \tilde{S}}^\pi}p({\bf \tilde{S}}^\pi;\Theta), \\
\end{equation} 
which can be computed as 
\begin{equation}
\label{eq:4b}
\begin{aligned}
p({\bf \tilde{S}}^\pi;\Theta) 
& = p({\bf \tilde{S}}^\pi_2|{\bf \tilde{S}}^\pi_{<2};\Theta)p({\bf \tilde{S}}^\pi_3|{\bf \tilde{S}}^\pi_{<3};\Theta)\cdots \\
& \quad\:\, p({\bf \tilde{S}}^\pi_{|V_O|+|V_M|}|{\bf \tilde{S}}^\pi_{<|V_O|+|V_M|};\Theta) \\
& = p({\bf \tilde{S}}^\pi_2;\{{\bf h}_1,{\bm \varphi_2}\})p({\bf \tilde{S}}^\pi_3;\{{\bf h}_2,{\bm \varphi_3}\})\cdots \\
& \quad\:\, p({\bf \tilde{S}}^\pi_{|V_O|+|V_M|};\{{\bf h}_{|V_O|+|V_M|-1},{\bm \varphi_{|V_O|+|V_M|}}\})\\
& = p({\bf \tilde{S}}^\pi_2;{\bm \varphi_2})p({\bf \tilde{S}}^\pi_3;{\bm \varphi_3})\cdots p({\bf \tilde{S}}^\pi_{|V_O|+|V_M|};{\bm \varphi_{|V_O|+|V_M|}})\\
& = \prod^{|V_O|+|V_M|}_{i=2}p({\bf \tilde{S}}^\pi_i;{\bm \varphi_i}),
\end{aligned}
\end{equation}
where  the first equality follows due to (\ref{eq:3}); the second equality holds since $\Theta$ is the set of learned model parameters of both $f_\text{trans}$ and $f_\text{out}$ in (\ref{eq:5a}) and (\ref{eq:5}), respectively; and the third equality stems from the fact that ${\bf \tilde{S}}^\pi_i$ is determined only by ${\bm \varphi_i}$.
Since ${\bm \varphi}_i$ is the set of variables of a multivariate Bernoulli distribution  in which each entry  represents the likelihood of edge existence, we have:
\begin{equation}
\label{eq:4c}
p({\bf \tilde{S}}^\pi;\Theta) = \prod_{i=2}^{|V_O|+|V_M|}\left(\prod_{\tilde{s}^\pi_{i,j} = 1}{\bm \varphi}_{i,j}\prod_{\tilde{s}^\pi_{i,j} = 0}(1-{\bm \varphi}_{i,j})\right),
\end{equation}
where $\tilde{s}^\pi_{i,j}$ denotes the $j$-th element of the binary vector ${\bf \tilde{S}}^\pi_i$ for $i\in\{2,\cdots,|V_O|+|V_M|\}$ and $j\in\{1,\cdots,i-1\}$; and ${\bm \varphi}_{i,j} \in (0,1)$ is the $j$-th element of ${\bm \varphi_i}$.
An example visualizing our \textsf{DeepNC} method is presented in Fig.~\ref{fig:methodology}, where we observe a network $G_O$  consisting of three nodes (i.e., A, B, and C) and two edges, instead of the true network ${G}_T$ with 5 nodes (i.e., A, B, C, M$_1$, and M$_2$). 

To solve (\ref{eq:6}), we need to compute $p({\bf \tilde{S}}^\pi;\Theta)$ via exhaustive search over $(|V_O|+|V_M|)!$ node permutations. 
Since computing $p(\tilde{\bf S}^\pi;\Theta)$  in (\ref{eq:4b}) requires $\frac{(|V_O|+|V_M|)^2}{2}$ multiplication operations and data imputation yields $2^{\frac{(|V_O|+|V_M|)(|V_O|+|V_M|-1)}{2}-|E_O|}$ possible outcomes of $\bf \tilde{S}^\pi$, its computational complexity is bounded by $\mathcal{O}((|V_O|+|V_M|)^22^{\frac{(|V_O|+|V_M|)(|V_O|+|V_M|-1)}{2}-|E_O|}(|V_O|+|V_M|)!)$. This motivates us to introduce a low-complexity algorithm in the next section for efficiently solving this problem.

\section{\textsf{DeepNC} Algorithms}\label{sec:4}

In this section, we introduce two algorithms that we design to efficiently solve the network completion problem in (\ref{eq:6}). In designing such algorithms, we focus on how to compute the likelihood of edge existence in the form of a tuple $(\hat{\pi},\Phi)$, where $\hat{\pi}$ represents a node ordering to be inferred and $\Phi=\{{\bm \varphi}_2,\cdots, {\bm \varphi}_{|V_O|+|V_M|}\}$. Then, $\bf \hat{S}^\pi$ in (\ref{eq:6}) can be acquired by sampling from $(\hat{\pi},\Phi)$. First, we present \textsf{DeepNC-L}, a \underline{l}ow-complexity \underline{deep} \underline{n}etwork \underline{c}ompletion algorithm, working based on the assumption that a partially observable graph $G_O$ is a complete subgraph with no missing edges. Second, we present an enhanced version of \textsf{DeepNC-L} using the EM algorithm~\cite{emalgorithm}, dubbed \textsf{DeepNC-EM}, to deal with the case where edges are missing in $G_O$. The overall architecture of both \textsf{DeepNC} algorithms is illustrated in Fig.~\ref{fig:algo_overall}. We also analyze  their computational complexities. 

\begin{figure}[t]
    \begin{center}
            \includegraphics[width=0.65\linewidth]{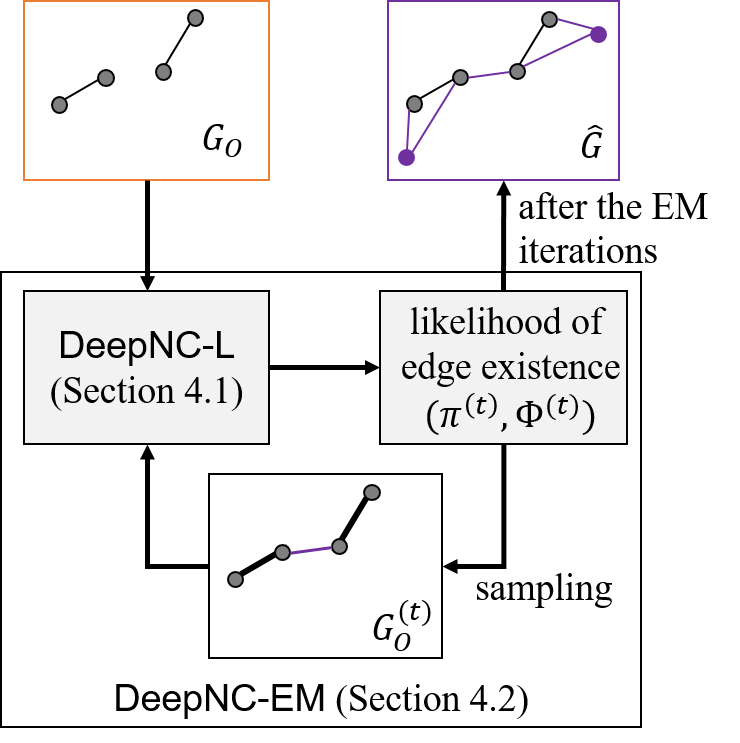}
            \caption{The overall architecture of \textsf{DeepNC} algorithms.}
            \label{fig:algo_overall}
    \end{center}
\end{figure}

\subsection{\textsf{DeepNC-L} Algorithm}\label{sec:4a}

\subsubsection{Overall Procedure}

\begin{figure*}[t]
    \begin{center}
            \includegraphics[width=0.75\linewidth]{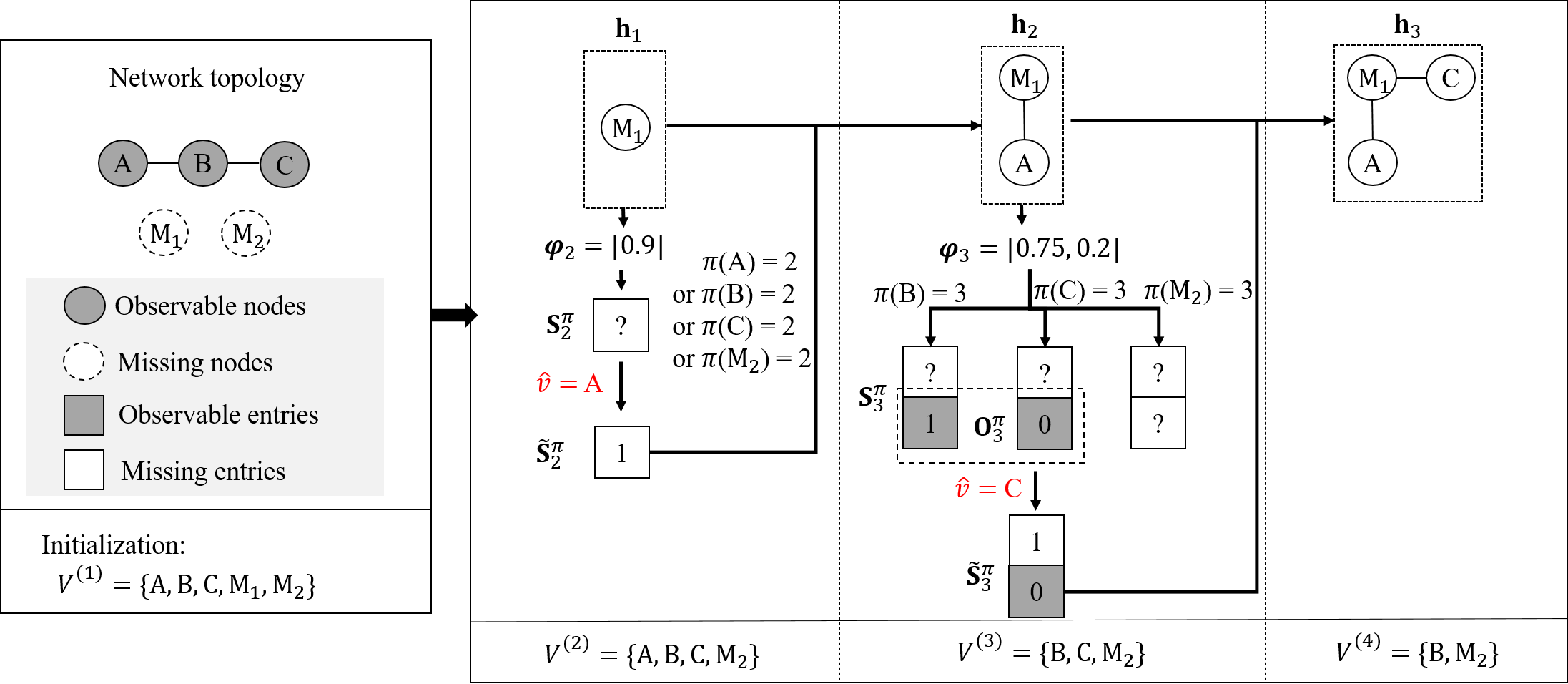}
            \caption{An illustration of the mechanism of \textsf{DeepNC-L}. The first three steps are shown as an example.}
            \label{fig:algorithm}
    \end{center}
\end{figure*}

We propose \textsf{DeepNC-L} that approximates the optimal solution to (\ref{eq:6}) under the assumption that there are no missing edges in $G_O$, which implies that the non-existent edges between nodes in $G_O$ are regarded as observable entries in ${\bf{S}}^\pi$.
Since $\Phi$ indicates the set of edge existence probabilities and is thus obtained from the set of learned model  parameters $\Theta$ for each $\pi$, (\ref{eq:6}) can be simplified to the problem of finding a node ordering $\hat{\pi}$ such that
\begin{equation}
\label{eq:9}
\hat{\pi}=\argmax_{\pi}p({\bf \tilde{S}}^\pi;\Theta),
\end{equation} 
where $\bf \tilde{S}^{\pi}$ is the sequence after data imputation under a given $\pi$.

To efficiently solve (\ref{eq:9}), we present two judicious approximation methods in the following. First, we design a {\em greedy} strategy that selects a single node at each inference (generation) step. More precisely, instead of exhaustively searching for the node ordering that maximizes  $p({\bf \tilde{S}}^\pi;\Theta)$ among $(|V_O|+|V_M|)!$ possible permutations,  we aim to {\em consecutively} find a single node $\hat{v} \in V^{(i)}$ such that 
\begin{equation}
\begin{split}
\label{eq:11a}
&\hat{v}  = \argmax_{v\in V^{(i)}} p({\bf{\tilde{S}}}^\pi_i;{\bm \varphi}_i)\\
&\text{subject to } \pi(v) = i
\end{split}
\end{equation} 
for each step $i \in \{2,\cdots,|V_O|+|V_M|\}$, where $V^{(i)}$ is a set of nodes that have not been generated until the $i$-th inference step and $\hat{v}$ is removed from $V^{(i)}$ after each inference step. That is, $V^{(i+1)} \leftarrow V^{(i)}\char`\\ \{\hat{v}\}$ (refer to Fig.~\ref{fig:algorithm} for the node removal). We note that the first node can be arbitrarily chosen in the generation process.
Second, we further approximate the solution to (\ref{eq:11a}) by treating all unknown entries  (i.e., missing data) in $\tilde{\bf S}_i^\pi$ {\em equally} during the computation while retrieving $\hat{v}$ from the set $V^{(i)}$, rather than computing the likelihoods in (\ref{eq:11a}) along with all entries in $\tilde{\bf S}_i^\pi$. Let us define {\em two types of nodes} as observable nodes and missing nodes. Then, we select a node of either type at random in {\em proportion to the number of nodes {\em belonging to} each type} in $V^{(i)}$ to ensure that  there is no bias in the node selection. When the selected node type is ``missing'', we choose $\hat{v}$ at random from all missing nodes in $V^{(i)}$ without any computation since all missing nodes are treated equally. In contrast, when the selected node type is ``observable'', we choose an observable node based solely on the computation for the observable entries in ${\bf{S}}^\pi_{i}$ by reformulating our problem as follows:
\begin{equation}
\begin{split}
\label{eq:15}
&\hat{v}  = \argmax_{v\in V_O \cap V^{(i)}} p({\bf{O}}^\pi_i;{\bm \varphi}_i)\\
&\text{subject to } \pi(v) = i,
\end{split}
\end{equation} 
for each step $i \in \{2,\cdots,|V_O|+|V_M|\}$, where ${\bf{O}}^\pi_i$ denotes the set of observable entries in ${\bf{S}}^\pi_{i}$; $p({\bf{O}}^\pi_i;{\bm \varphi}_i) =
\prod_{{s}^\pi_{i,j} = 1}{\bm \varphi}_{i,j}\prod_{{s}^\pi_{i,j} = 0}(1-{\bm \varphi}_{i,j})$ since the observable entries are only taken into account and ${s}^\pi_{i,j}$ denotes the $j$-th element of ${\bf{S}}^\pi_i$; and $V_O \cap V^{(i)}$ indicates the set of remaining observable nodes after $i-1$ inference steps. Note that $p({\bf{O}}^\pi_i;{\bm \varphi}_i)$ is non-computable if there is no observable entry in ${\bf{S}}^\pi_{i}$.

Now, we are ready to show a stepwise description of the \textsf{DeepNC-L} algorithm. 

{\bf 1. Initialization}: For $i=1$, we set $V^{(1)}$ to $V_O \cup V_M$ and randomly choose a node in $V^{(1)}$ to be $\hat{v}$. 

{\bf 2. Node selection}: For $i \in \{2,\cdots,|V_O|+|V_M|\}$, we find $\hat{v}$ by either randomly selecting a missing node in $V^{(i)}$ or solving~(\ref{eq:15}), depending on which node type is selected.

{\bf 3. Data imputation}: After finding $\hat{v}$, we apply a {\em data imputation} strategy of the missing part (i.e., unknown entries) in ${\bf S}_i^\pi$ through the inference process of GraphRNN-S. Specifically, suppose that $\pi(u) = i$ and $\pi(v) = j$, which means that the $i$-th and $j$-th nodes in a given node ordering $\pi$ are $u$ and $v$, respectively. Then, we have
\begin{equation}
\label{eq:6c}
    \tilde{s}_{ij}^\pi= 
\begin{cases}
    \text{Bernoulli}({\bm \varphi}_i[j]),& \text{if } u \notin V_O \text{ or } v \notin V_O\\
    {s}_{i,j}^\pi,              & \text{otherwise},
\end{cases}
\end{equation}
where the Bernoulli trial with the probability ${\bm \varphi}_i[j]$ maps the value of the unknown entry to 1 if the outcome ``success" occurs and to 0 otherwise. 

{\bf 4. Repetition}: We iterate the second and third steps $|V_O|+|V_M|-1$ times until the recovered graph is fully generated. 

For a more intuitive understanding, consider the following example.

{\bf Example 1}: As illustrated in Fig.~\ref{fig:algorithm}, let us describe three steps to select the first three nodes of a given graph according to the aforementioned procedure. We start by randomly assigning the first node of the inference process to node M$_1$ (i.e., $\pi(\text{M}_1) = 1$ and $V^{(2)} \leftarrow V^{(1)} \char`\\ \{\text{M}_1\}$). Since we do not have any information about the connections for the unseen node M$_1$, ${{s}}^\pi_{2,1}$ is unknown for all nodes $v \in V^{(2)}$. Suppose that we generate an observable node at this step by random selection. Since there is no observable entry in ${\bf{S}}^\pi_{i}$, we randomly choose node A among the three nodes in $V_O \cap V^{(2)}$ as the second node and set $\pi(\text{A}) = 2$, resulting in $V^{(3)}\leftarrow V^{(2)} \setminus \{\text{A} \}$. Assuming that ${\bm \varphi_2} = [0.9]$ and a Bernoulli trial with the probability ${\bm \varphi_2}$ returns $1$, we impute ${\tilde{s}}^\pi_{2,1}$ with $1$ according to (\ref{eq:6c}). Let us turn to the next step in order to select the third node. In this case, since nodes B and C belong to the  type of observable nodes, ${\tilde{s}}^\pi_{3,2}$ takes the value of either 1 or 0, depending on the connections to node A. Suppose that we again generate an observable node at this step and ${\bm \varphi_3} = [0.75, 0.2]$. When either $\pi(\text{B})=3$ or $\pi(\text{C})=3$, the likelihood $p({\bf{O}}^\pi_3;{\bm \varphi_3})$ can be computed as:
\begin{itemize}
\item If $\pi(\text{B}) = 3$, then it follows that $p({\bf{O}}^\pi_3;{\bm \varphi_3}) = {\bm \varphi_{3,2}} = 0.2$ using (\ref{eq:4b}).
\item If $\pi(\text{C}) = 3$, then it follows that $p({\bf{O}}^\pi_3;{\bm \varphi_3}) = 1-{\bm \varphi_{3,2}} = 1-0.2 = 0.8$  in a similar manner.
\end{itemize}
Based on the above results, setting $\pi(\text{C})$ to 3 leads to the maximum value of $p({\bf{O}}^\pi_3;{\bm \varphi_3})$, which is thus the solution to the problem in (\ref{eq:15}) for $i=3$. As depicted in Fig.~\ref{fig:algorithm}, node C is chosen in this step. By assuming that a Bernoulli trial with the probability ${\bm \varphi_{3,1}}=0.75$ returns $1$, we finally have ${\bf \tilde{S}}^\pi_3=[1,0]$.

\subsubsection{Computational Efficiency}

In the following, we examine how to efficiently compute the likelihoods in (\ref{eq:15}) through a complexity reduction technique.
We start by making a helpful observation as illustrated in Fig.~\ref{fig:example2}. Suppose that nodes M$_1$, A, B, and E from the original graph with 8 observable nodes and 3 missing nodes have already been generated sequentially  after four inference steps. Then, one can see that ${\bf{O}}^\pi_5 = 0$ when node D, G, or H is selected in the fifth step (i.e., $\pi(\text{D})=5$, $\pi(\text{G})=5$, or $\pi(\text{H})=5$) since each of the three nodes has no connection to the nodes A, B, and E that have already been generated. Consequently, the likelihood $p({\bf{O}}^\pi_5;{\bm \varphi}_5)$ is identical for these three cases. We generalize this observation in the following lemma.
\begin{lemma}
\label{lem:1}
Let $L^{(i)}$ denote the set of not yet selected direct neighbors of observable nodes generated for $i-1$ inference steps, expressed as
\begin{equation}
\label{eq:16}
    L^{(i)}= 
\begin{cases}
   (L^{(i-1)} \cup \mathcal{N}(\hat{v}))\cap V^{(i)}, & \text{\upshape if } \hat{v} \in V_O\\
   L^{(i-1)} \cap V^{(i)},              & \text{\upshape otherwise},
\end{cases}
\end{equation}
where $i \in \{2,\cdots,|V_O|+|V_M|\}$, $L^{(1)} = \varnothing$, $\hat{v}$ is the selected node in the $(i-1)$-th step, and $\mathcal{N}(\hat{v})$ is the set of (direct) neighbors of $\hat{v}$.
Then, the likelihood $p({\bf{O}}^\pi_i;{\bm \varphi}_i)$ in (\ref{eq:15}) is the same for all $u \notin L^{(i)}$, where $u \in V_O$ and $\pi(u)=i$. 
\end{lemma}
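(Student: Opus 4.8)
The plan is to reduce the claim to a clean combinatorial characterization of $L^{(i)}$ and then observe that the likelihood in (\ref{eq:15}) collapses to a quantity independent of the particular candidate node $u$. First I would fix the algorithmic state at the start of step $i$: once the first $i-1$ nodes have been generated, both the probability vector ${\bm \varphi}_i = f_\text{out}({\bf h}_{i-1})$ and the set $J$ of positions $j<i$ occupied by observable nodes are completely determined, and neither depends on which node $u$ we tentatively assign to $\pi(u)=i$. This is the key structural fact, and it follows from (\ref{eq:5a})--(\ref{eq:5}), since ${\bf h}_{i-1}$ is a function of ${\bf S}^\pi_1,\dots,{\bf S}^\pi_{i-1}$ alone.

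Next I would prove, by induction on $i$, that $L^{(i)}$ coincides with the set of not-yet-selected neighbors of the observable nodes generated so far, i.e.\ $L^{(i)} = \bigl(\bigcup_{w} \mathcal{N}(w)\bigr)\cap V^{(i)}$, where the union runs over all observable nodes placed in positions $1,\dots,i-1$. The base case $L^{(1)}=\varnothing$ is immediate. For the inductive step I would split on the type of $\hat{v}$ (the node selected at step $i-1$) exactly as in the case distinction of (\ref{eq:16}): using $V^{(i)} = V^{(i-1)}\setminus\{\hat{v}\} \subseteq V^{(i-1)}$, I would distribute the intersection with $V^{(i)}$ over the union, substitute the induction hypothesis $L^{(i-1)} = \bigl(\bigcup_w \mathcal{N}(w)\bigr)\cap V^{(i-1)}$, and verify that appending $\mathcal{N}(\hat{v})$ precisely when $\hat{v}\in V_O$ reproduces the augmented union of observable neighborhoods. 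The observable case grows the neighborhood union by $\mathcal{N}(\hat{v})$, while the missing case leaves it unchanged, matching the two branches of (\ref{eq:16}).

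Finally I would feed this characterization into the likelihood. If $u\in V_O$, $\pi(u)=i$, and $u\notin L^{(i)}$, then $u$ is adjacent to none of the already-generated observable nodes; since $G_O$ is undirected, every observable entry of ${\bf S}^\pi_i$, i.e.\ every coordinate of ${\bf O}^\pi_i$ indexed by a position in $J$, equals $0$. Substituting into $p({\bf O}^\pi_i;{\bm \varphi}_i)=\prod_{s^\pi_{i,j}=1}{\bm \varphi}_{i,j}\prod_{s^\pi_{i,j}=0}(1-{\bm \varphi}_{i,j})$ then yields $p({\bf O}^\pi_i;{\bm \varphi}_i)=\prod_{j\in J}(1-{\bm \varphi}_{i,j})$, which by the first paragraph depends only on ${\bm \varphi}_i$ and $J$ and is therefore identical for all such $u$.

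I expect the combinatorial induction on $L^{(i)}$ to be the only delicate point, essentially the bookkeeping of the interaction between the intersection with the shrinking vertex set $V^{(i)}$ and the union of neighborhoods. Once that identity is in place, the collapse of the likelihood to a $u$-independent product is a one-line substitution.
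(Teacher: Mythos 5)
Your proposal is correct and follows essentially the same route as the paper's proof: the core step in both is that $u \in V_O$ with $u \notin L^{(i)}$ has no edges to already-generated observable nodes, so every entry of ${\bf O}^\pi_i$ is $0$ and the likelihood collapses to $\prod_{s^\pi_{i,j}=0}(1-{\bm \varphi}_{i,j})$, a quantity independent of $u$. Your two additions---the induction identifying $L^{(i)}$ with the union of observable neighborhoods intersected with $V^{(i)}$, and the explicit observation that ${\bm \varphi}_i$ and the set of observable positions are fixed before $u$ is chosen---are facts the paper leaves implicit, so your write-up is simply a more rigorous rendering of the same argument.
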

\begin{proof}
For the observable node $u$ that does not belong to the set $L^{(i)}$ and is not generated for $i-1$ inference steps, all observable entries in ${\bf S}_i^\pi$ (i.e., entries in ${\bf O}_i^\pi$) take the value of 0's since there is no associated edge. Thus, it follows that $p({\bf{O}}^\pi_i;{\bm \varphi}_i) = \prod_{{s}^\pi_{i,j} = 0}(1-{\bm \varphi}_{i,j})$, which is identical for all $u \notin L^{(i)}$, where $u \in V_O$ and $\pi(u)=i$. This completes the proof of this lemma.
\end{proof}
Lemma~\ref{lem:1} allows us to compute the likelihood $p({\bf{O}}^\pi_i;{\bm \varphi}_i)$ only once for all nonselected observable nodes $u \notin L^{(i)}$ when solving (\ref{eq:15}), which corresponds to the case where node D, G, or H is selected in the fifth step in Fig.~\ref{fig:example2} while $L^{(5)}=\{\text{C,F}\}$, indicating the set of nonselected neighbors of nodes A, B, and E.

\begin{figure}[t]
    \begin{center}
            \includegraphics[width=0.75\linewidth]{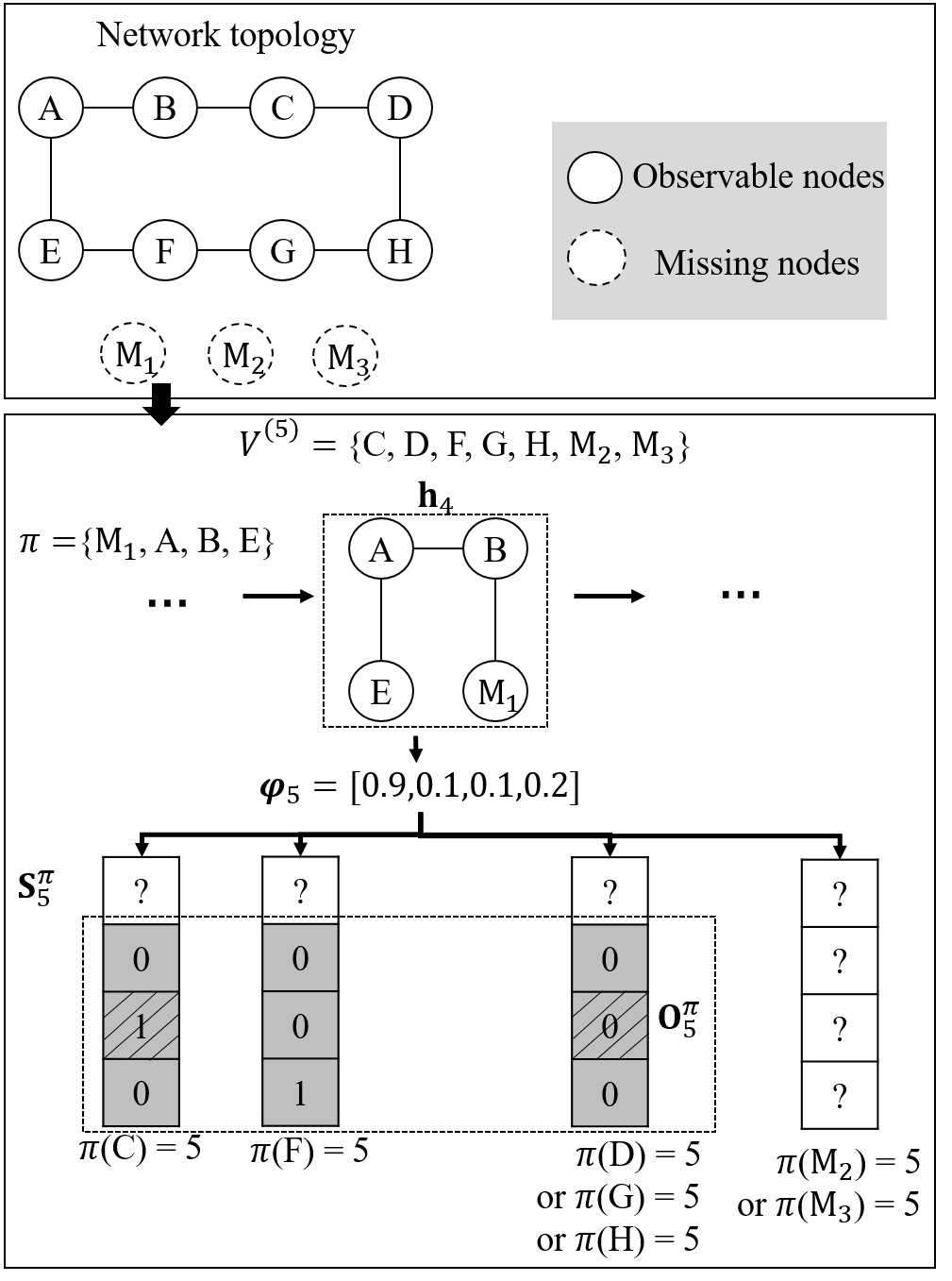}
            \caption{An example illustrating the fifth inference step of \textsf{DeepNC-L}, where nodes M$_1$, A, B, and E have  been generated sequentially.}
            \label{fig:example2}
    \end{center}
\end{figure}

Next, we explain how to efficiently solve the problem in (\ref{eq:15}) without computing likelihoods $p({\bf{O}}^\pi_i;{\bm \varphi}_i)$ for observable nodes. From Fig.~\ref{fig:example2}, one can see that $s_{5,3}^\pi$ (corresponding to entries with diagonal lines in ${\bf O}_5^\pi$) is the only term that makes the difference between two sets ${\bf{O}}^\pi_5$ for the cases when node  C is selected and when either node D, G, or H is selected, which implies that it may not be necessary to compute the likelihoods of $s_{5,2}^\pi$ and $s_{5,4}^\pi$ for node selection. Thus,  from the fact that most of the entries in ${\bf{O}}^\pi_i$ tend to be 0's in many real-world networks that are usually sparse,  the computational complexity can be greatly reduced if we make the comparison  of likelihoods in (\ref{eq:15}) based only on the entries  in ${\bf{O}}^\pi_i$ that have a value of 1. To this end, we eliminate all terms $(1-{\bm \varphi}_{i,j})$ corresponding to ${s}_{i,j}^\pi=0$ from $p({\bf{O}}^\pi_i;{\bm \varphi}_i)$ when a node $v \in V_O \cap V^{(i)}$ is selected. For computational convenience, we define
\begin{equation}
\begin{split}
\label{eq:distance}
D_v &= \frac{\prod_{{s}_{i,j}^\pi=1}{\bm \varphi}_{i,j}\prod_{{s}_{i,j}^\pi=0}(1-{\bm \varphi}_{i,j})}
{\prod_{{s}_{i,j}^\pi \in {\bf{O}}^\pi_i}{(1-{\bm \varphi}_{i,j})}} \\
&= \prod_{{s}_{i,j}^\pi=1}\frac{{\bm \varphi}_{i,j}}{(1-{\bm \varphi}_{i,j})}
\end{split}
\end{equation} 
for $v\in V_O \cap V^{(i)}$. Since the denominator in (\ref{eq:distance}) is the same for all $v \in V_O \cap V^{(i)}$, it is obvious that $\hat{v}=\argmax_v{D_v}$ is the solution to (\ref{eq:15}). We note that computing $D_v$ is less computationally  expensive than computing $p({\bf{O}}^\pi_i;{\bm \varphi}_i)$ when the number of entries with the value of 1's in ${\bf O}_i^\pi$ is low. As a special case in which all observable entries in ${\bf S}_i^\pi$ take the value of 0's, the denominator in (\ref{eq:distance}) is equivalent to $p({\bf{O}}^\pi_i;{\bm \varphi}_i)$, from which it follows that $D_u = 1$ due to the fact that a node $u\notin L^{(i)}$ is selected. Thus, if $D_v < 1$ for all $v \in L^{(i)}$, then the likelihood  in (\ref{eq:15}) for selecting a node $u \notin L^{(i)}$ is higher than that for selecting a node $v\in L^{(i)}$. In this case, we randomly choose a node $\hat{v} \notin L^{(i)}$ without further computation based on Lemma~\ref{lem:1}. In consequence, we compute $D_v$ only for nodes in the set $L^{(i)}$, rather than computing $D_v$ for all nodes in $V_O \cap V^{(i)}$. The following example describes how the computational complexity can be  reduced according to the aforementioned technique by revisiting Fig.~\ref{fig:example2}.

{\bf Example 2}: Suppose that we generate an observable node at the fifth inference step and ${\bm \varphi}_5 = [0.9, 0.1, 0.1, 0.2]$. In this step, one can see that $L^{(5)}= \{\text{C},\text{F}\}$; thus, instead of computing the likelihood $p({\bf{O}}^\pi_3;{\bm \varphi_5})$ in (\ref{eq:15}) five times for all  nonselected observable nodes C, D, F, G, and H in $V^{(5)}$, we only compute $D_\text{C} =\frac{{\bm \varphi}_{5,3}}{1-{\bm\varphi}_{5,3}}= \frac{0.1}{1-0.1}$ and $D_\text{F}=\frac{{\bm\varphi}_{5,4}}{1-{\bm\varphi}_{5,4}}= \frac{0.2}{1-0.2}$ from (\ref{eq:distance}). Since both $D_\text{C}$ and $D_\text{F}$ are smaller than 1, we randomly choose one of the three observable nodes D, G, and H that are not in $L^{(5)}$ as $\hat{v}$.


\subsubsection{Stepwise Summary of \textsf{DeepNC-L}}\label{sec:413}

\begin{algorithm}[t]
\DontPrintSemicolon
\KwIn{${G_O}, |V_M|, f_\text{out}, f_\text{trans}$}
\KwOut{$(\hat{\pi}, \Phi)$}
\textbf{Initialization}: $ i \leftarrow 2; {\bf h}_0 \leftarrow$ random initialization; ${\bf \tilde{S}}_1^{\pi} \leftarrow \varnothing;  \hat{v} \leftarrow v \in V_O \cup V_M;\pi(\hat{v})\leftarrow 1;$\qquad
$L^{(1)}\leftarrow \varnothing;$ Update $L^{(i)}$ according to (\ref{eq:16});\\
\SetKwBlock{Begin}{function}{end function}
\Begin(\textsf{DeepNC-L})
{
  \While {$i \geq |V_O|+|V_M|$}  {
	${\bf h}_{i-1}  \leftarrow f_\text{trans}({\bf h}_{i-2}, {\bf\tilde{S}}^{\pi}_{i-1})$\;
	${\bm \varphi}_{i}  \leftarrow f_\text{out}({\bf h}_{i-1})$\;
	Select a node type\;
	\uIf {the selected node type is ``observable''}{
		\For{$ v \in L^{(i)}$} {
		Compute $D_v$ according to (\ref{eq:distance})\;
		}
		\uIf {\upshape ($D_v < 1$ for all $v$ or $L^{(i)}=\varnothing$) and $L^{(i)} \neq V_O \cap V^{(i)}$}  {
			Randomly select an observable node $\hat{v} \notin L^{(i)}$\;
		}
		\uElse{
			$\hat{v}\leftarrow \argmax_v D_v$\;
		}
		Update $L^{(i)}$ according to (\ref{eq:16})\;
    	}
    	\uElse{
		Randomly select an unobservable node $\hat{v}$\;
      }
     ${\bf\tilde{S}}^\pi_i \leftarrow$ Impute ${\bf S}_i^\pi$ according to (\ref{eq:6c})\;
	$\pi(\hat{v}) \leftarrow i+1$\;
	$i \leftarrow i + 1$\;
   }
  \Return{$(\hat{\pi},\Phi)$}
}
\caption{\textsf{DeepNC-L}}\label{al1}
\end{algorithm}

We summarize the overall procedure of our \textsf{DeepNC-L} algorithm in Algorithm 1.
We initially select the first node at random, and then start the inference process by identifying connections for the next node according to the following four stages: 

{\bf 1.} Using the two functions $f_\text{trans}$ and $f_\text{out}$ in  (\ref{eq:5a}) and (\ref{eq:5}), respectively, we obtain ${\bm \varphi}_{i}$ (refer to lines 4--5). 

{\bf 2.} Let $m$ denote the cardinality of the set of missing nodes that can be potentially generated in the $i$-th step. We then randomly select a node type so that the selected node is missing with probability of $\frac{m}{|V_O|+|V_M|-i+1}$ (refer to line 6). 

{\bf 3a.} If the type of observable nodes is selected, then we compute $D_v$, which is a function of $\varphi_i$, according to (\ref{eq:distance}) for all $v \in L^{(i)}$. When $D_v < 1$ for all $v \in L^{(i)}$ or $L^{(i)}=\varnothing$, we randomly select an observable node $\hat{v} \notin L^{(i)}$ provided that $L^{(i)} \neq V_O \cap V^{(i)}$. Otherwise, we select the node $\hat{v}$ that maximizes $D_v$. Afterwards, we update $L^{(i)}$ by including neighbors of the selected node $\hat{v}$ (refer to lines 7--14).

{\bf 3b.} If the type of missing nodes is selected, then we select one node $\hat{v}$ randomly among all missing nodes that have not been generated until the $i$-th step. (refer to lines 15--16).

{\bf 4.} The data imputation process takes place before the next iteration  of node generation. Finally, we update the node ordering $\pi$ by including the selected node $\hat{v}$ for the $i$-th step. The algorithm continues by repeating stages 1--4 and terminates when a fully inferred sequence ${\bf{S}}^\pi$ is generated (refer to lines 17--20). 

We remark that a node ordering $\hat{\pi}$ is found given a set of edge existence probabilities $\Phi$, which is inferred by our model parameters $\Theta$, while assuming that $G_O$ is a complete subgraph; thus, the resulting tuple $(\hat{\pi},\Phi)$ may not be accurate when there are missing edges in $G_O$. This motivates us to develop the \textsf{DeepNC-EM} algorithm in the following subsection.

\subsection{\textsf{DeepNC-EM} Algorithm}\label{sec:4a2}

In this subsection, we introduce \textsf{DeepNC-EM} to further improve the performance of \textsf{DeepNC-L} by relaxing the assumption that there are no missing edges between two nodes in $G_O$. A na\"ive recovery of $G_O$ even with state-of-the-art link prediction methods before conducting network completion may lead to suboptimal performance since the network structures of $G_O$ are potentially distorted due to the effect of missing nodes and missing incident edges. Thus, we aim to find the most likely configuration of three types of missing edges in the set $E_M$ specified in Section~\ref{sec:3a1} by {\em jointly} estimating a tuple $(\pi,\Phi)$. To this end, we solve (\ref{eq:6}) by designing an improved \textsf{DeepNC} method using the EM algorithm.

We now describe the proposed \textsf{DeepNC-EM}, which is built upon the  \textsf{DeepNC-L}  algorithm in Section~\ref{sec:4a}. Let 
$(\pi^{(0)}, \Phi^{(0)})$  and $Z$ denote the initial output of \textsf{DeepNC-L}  and the set of non-existent edges between nodes in $G_O$, respectively. First, we estimate the {\em potential existence} likelihoods of edges in $Z$, denoted by $\Phi_Z$, by extracting $|V_O|^2-E_O$ elements corresponding to $Z$ from the likelihoods $\Phi^{(0)}$ of all edges under the node ordering $\pi^{(0)}$. Then, the E-step samples $Z^{(t)}$ from $p(Z^{(t)}|\Phi_Z^{(t)})$ via Bernoulli trials to create multiple instances of $G_O^{(t)}$, where the supercript $(t)$ denotes the EM iteration index. In the M-step, we adopt \textsf{DeepNC-L} to subsequently optimize the parameters $\Phi_Z$ given the samples obtained in the E-step. The EM iteration alternates between performing the E-step and M-step according to the following expressions, respectively:

{\em E-step:} ${Z^{(t)} \sim p(Z|\Phi_Z^{(t)})},$

{\em M-step:} $\Phi_Z^{(t+1)}=\argmax_{\Phi_Z} \mathbb{E}[p(Z^{(t)}|\Phi_Z)].$

The overall procedure of \textsf{DeepNC-EM} is summarized in Algorithm~\ref{al2}. Here,  Filter($\pi^{(t)}[i], \Phi^{(t)}[i]$) in lines 1 and 10 is invoked to retrieve $\Phi_Z^{(t)}$ from $\Phi^{(t)}$; $\eta >0$ is an arbitrarily small threshold indicating a stopping criterion for the algorithm; $\Delta_s$ denotes the number of samples in each E-step; and $[i]$ indicates the sample index.

\begin{algorithm}[t]
\DontPrintSemicolon
\KwIn{$\pi^{(0)}, \Phi^{(0)}, {G_O}, |V_M|, f_\text{out}, f_\text{trans}, \Delta_s$}
\KwOut{$(\hat{\pi}, \hat{\Phi})$}
\textbf{Initialization}: $t \leftarrow 0; \Phi_Z^{(0)} \leftarrow \text{Filter}(\pi^{(0)}, \Phi^{(0)});$
\SetKwBlock{Begin}{function}{end function}
\Begin(\textsf{DeepNC-EM})
{
  \Do {$\norm{\Phi_Z^{(t)}-\Phi_Z^{(t-1)}}_2 < \eta$}  {
	E-step:\;
		\For{$i \in \{1,\cdots,\Delta_s\}$} {
		       $Z^{(t)}[i] \sim p(Z|\Phi_Z^{(t)})$\;
			$G_O^{(t)}[i]  \leftarrow \text{ add edges sampled from } Z^{(t)}[i]$\;
		}
	M-step:\;
     		\For{$i \in \{1,\cdots,\Delta_s\}$} {
			 $(\pi^{(t+1)}[i], \Phi^{(t+1)}[i]) \leftarrow$ \textsf{DeepNC-L}(${G_O^{(t)}[i]}, |V_M|, f_\text{out}, f_\text{trans}$)\;
			$\Phi_Z^{(t+1)}[i] \leftarrow \text{Filter}(\pi^{(t+1)}[i], \Phi^{(t+1)}[i])$\;
		}
        
	$\Phi_Z^{(t+1)} \leftarrow \frac{1}{\Delta_s}\sum_i \Phi_Z^{(t+1)}[i]$\;	
	$t \leftarrow t + 1$\;
   }
   $\hat{Z} \sim p(Z|\Phi_Z^{(t+1)})$\;
   $\hat{G}_O \leftarrow \text{ add edges from }\hat{Z}$\;
   $(\hat{\pi},\hat{\Phi}) \leftarrow$ \textsf{DeepNC-L}(${\hat{G}_O}, |V_M|, f_\text{out}, f_\text{trans}$)\;
  \Return{$(\hat{\pi},\hat{\Phi})$}
}
\caption{\textsf{DeepNC-EM}}\label{al2}
\end{algorithm}

\subsection{Complexity Analysis}\label{sec:4b}
In this subsection, we analyze the computational complexities of the \textsf{DeepNC-L} and \textsf{DeepNC-EM}  algorithms.  
\subsubsection{Complexity of \textsf{DeepNC-L}}\label{sec:4b1}

We start by examining the complexity of each inference step $i \in \{2,\cdots,|V_O|+|V_M|\}$. It is not difficult to show that the complexity is dominated by the case in which an observable node is selected in the inference process. Note that it is possible to compute $D_v$ in constant time as the average degree over a network is typically regarded as a constant~\cite{degreeconstant}. Thus, the complexity of this step is bounded by $\mathcal{O}(|L^{(i)}|)$ since we exhaustively compute $D_v$ over the nodes $v \in L^{(i)}$. The data imputation process is computable in constant time when parallelization can be applied since the Bernoulli trials are independent of each other. As our algorithm is composed of $|V_O|+|V_M|-1$ inference steps, the total complexity is finally given by $\mathcal{O}((|V_O|+|V_M|)|L^{(i)}|)$, which can be rewritten as $\mathcal{O}(|V_O|\cdot|L^{(i)}|)$ from to the fact that $|V_M| \ll |V_O|$. The following theorem states a comprehensive analysis of this computational complexity.

\begin{theorem}
Lower and upper bounds on the computational complexity of the proposed \textsf{DeepNC-L} algorithm are given by $\Omega(|V_O|)$  and $\mathcal{O}(|V_O|^2)$, respectively.
\end{theorem}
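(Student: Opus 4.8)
The plan is to reduce the theorem to bounding the cardinality of the set $L^{(i)}$ defined in Lemma~\ref{lem:1}, since the preceding discussion already establishes that the per-step work of \textsf{DeepNC-L} is $\mathcal{O}(|L^{(i)}|)$. Concretely, I would first argue that each inference step $i$ incurs a cost of the form $1+|L^{(i)}|$ up to constant factors: generating ${\bm \varphi}_i$ through $f_\text{trans}$ and $f_\text{out}$, selecting a node type, imputing the entries of ${\bf S}_i^\pi$ via the independent (hence parallelizable) Bernoulli trials in (\ref{eq:6c}), and updating $L^{(i)}$ through (\ref{eq:16}) are all $\mathcal{O}(1)$, while the only step-dependent cost is the evaluation of $D_v$ in (\ref{eq:distance}) over $v \in L^{(i)}$, each such evaluation being $\mathcal{O}(1)$ under the standard assumption of constant average degree. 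Since the algorithm executes exactly $|V_O|+|V_M|-1$ iterations of its main loop, the total running time is $\Theta\!\left(\sum_{i=2}^{|V_O|+|V_M|}\big(1+|L^{(i)}|\big)\right)$, and the theorem follows once this sum is bounded from above and below. Throughout I would use $|V_M|\ll|V_O|$ to replace $|V_O|+|V_M|$ by $\Theta(|V_O|)$.

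For the upper bound, I would observe directly from the recursion (\ref{eq:16}) that $L^{(i)}\subseteq V_O\cap V^{(i)}\subseteq V_O$, so that $|L^{(i)}|\le|V_O|$ at every step. Summing the per-step cost over the $\Theta(|V_O|)$ iterations then yields $\mathcal{O}(|V_O|^2)$. To see this bound is not improvable in general, I would exhibit a dense instance: if the first selected observable node is adjacent to a constant fraction of $V_O$, then $|L^{(i)}|=\Theta(|V_O|)$ persists for a constant fraction of the remaining steps, forcing the sum to be $\Theta(|V_O|^2)$.

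For the lower bound, the key observation is structural rather than combinatorial: irrespective of the topology of $G_O$, the main loop runs $|V_O|+|V_M|-1$ times and each iteration performs at least the constant-cost operations above, so the running time is at least $\Omega(|V_O|+|V_M|)=\Omega(|V_O|)$. To confirm this floor is attained, I would take a sparse (in the extreme, edgeless) observable graph $G_O$: then $\mathcal{N}(\hat{v})=\varnothing$ whenever an observable node is selected, so $L^{(i)}=\varnothing$ for all $i$ by (\ref{eq:16}), every step costs $\mathcal{O}(1)$, and the total collapses to $\mathcal{O}(|V_O|)$, matching $\Omega(|V_O|)$.

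I do not expect any single step to be a serious obstacle, since the heavy lifting was done in Lemma~\ref{lem:1} and the complexity-reduction argument around (\ref{eq:distance}). The only points requiring care are (i) justifying that $L^{(i)}\subseteq V_O$ so that the $\mathcal{O}(|V_O|^2)$ ceiling is valid, and (ii) supplying explicit graph families — a dense graph for the upper bound and a sparse or edgeless graph for the lower bound — that witness the tightness of each bound, so that $\Omega(|V_O|)$ and $\mathcal{O}(|V_O|^2)$ are genuinely attained rather than merely formal.
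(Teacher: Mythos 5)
Your proposal is correct and follows essentially the same route as the paper's proof: bound the per-step cost by $\mathcal{O}(|L^{(i)}|)$, observe that $|L^{(i)}|$ can be as small as $0$ (isolated/edgeless observable nodes) and is at most $\Theta(|V_O|)$ (dense graphs, the paper's fully-connected worst case), and multiply over the $|V_O|+|V_M|-1 = \Theta(|V_O|)$ inference steps. Your write-up is in fact slightly more careful than the paper's, which only exhibits the best-case and worst-case instances, whereas you separate the universal bounds (the loop-count floor and the ceiling $L^{(i)} \subseteq V_O$) from the explicit graph families witnessing their tightness.
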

\begin{proof}
The parameter $L^{(i)}$ is the set of neighboring nodes to the observable nodes that have already been generated in the $i$-th step, while its cardinality depends on the network topology. For the best case where all nodes are isolated with no neighbors, we always have $|L^{(i)}|=0$ for each generation step; thus, each step is computable in constant time, yielding the total complexity of $\Omega(|V_O|)$.
For the worst case, corresponding to a fully-connected graph, it follows that $|L^{(i)}|=|V_O|+|V_M|-i$ for each generation step, thus yielding the total complexity of  $\mathcal{O}(|V_O|^2)$. This completes the proof of this theorem.
\end{proof}

From Theorem 1, it is possible to establish the following corollary.
\begin{corollary}
The computational complexity of the \textsf{DeepNC-L} algorithm scales as $\Theta(|V_O|^{1+\epsilon})$, where $0\le\epsilon\le1$ depends on a given network topology, i.e., the sparsity of networks.
\end{corollary}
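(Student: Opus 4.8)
The plan is to derive the corollary directly from the refined complexity expression $\mathcal{O}(|V_O| \cdot |L^{(i)}|)$ established immediately before Theorem 1, treating the cardinality $|L^{(i)}|$ as the single quantity that encodes the sparsity of the network. Since the algorithm performs $|V_O|+|V_M|-1$ inference steps and $|V_M| \ll |V_O|$, the number of steps is $\Theta(|V_O|)$, and the per-step cost is dominated by the exhaustive computation of $D_v$ over $v \in L^{(i)}$, which costs $\mathcal{O}(|L^{(i)}|)$ since each $D_v$ is computable in constant time. The corollary is then obtained by parameterizing how $|L^{(i)}|$ grows with $|V_O|$.

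First I would introduce a topology-dependent exponent $\epsilon$ by writing $|L^{(i)}| = \Theta(|V_O|^{\epsilon})$, where $\epsilon$ quantifies the density of neighbors accumulated among the already-generated observable nodes. Substituting this into $\mathcal{O}(|V_O| \cdot |L^{(i)}|)$ yields the scaling $\Theta(|V_O| \cdot |V_O|^{\epsilon}) = \Theta(|V_O|^{1+\epsilon})$, which is precisely the desired form.

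Next I would pin down the admissible range of $\epsilon$ using the two extreme regimes already analyzed in Theorem 1. In the sparsest case, where observable nodes are isolated (or the average degree is a constant so that $|L^{(i)}| = \mathcal{O}(1)$), we have $\epsilon = 0$ and the complexity reduces to $\Theta(|V_O|)$, matching the lower bound $\Omega(|V_O|)$. In the densest case, corresponding to a fully-connected graph where $|L^{(i)}| = |V_O|+|V_M|-i = \Theta(|V_O|)$, we have $\epsilon = 1$ and the complexity becomes $\Theta(|V_O|^2)$, matching the upper bound $\mathcal{O}(|V_O|^2)$. Because every network lies between these two extremes, $\epsilon$ is confined to $[0,1]$, and the complexity interpolates monotonically between $|V_O|$ and $|V_O|^2$ as the graph becomes denser.

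The main obstacle is justifying that the per-step neighbor set $|L^{(i)}|$ can be captured by a single power-law exponent $\epsilon$ that is uniform across the inference steps. The two endpoints are rigorous consequences of Theorem 1, but the intermediate claim is essentially an interpolation argument: it asserts that the aggregate cost $\sum_i |L^{(i)}|$ scales as $|V_O|^{1+\epsilon}$ for some effective $\epsilon \in [0,1]$ determined by the sparsity of the network. I would therefore frame $\epsilon$ as the exponent best describing this aggregate growth, so that the corollary follows as a unified, sparsity-parameterized restatement of the bounds in Theorem 1 rather than a claim requiring an independent estimate.
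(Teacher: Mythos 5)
Your proposal is correct and takes essentially the same route as the paper: the paper offers no independent argument for the corollary, obtaining it directly from Theorem~1 by interpolating between the $\Omega(|V_O|)$ and $\mathcal{O}(|V_O|^2)$ bounds through a topology-dependent exponent $\epsilon \in [0,1]$, whose value is then pinned down only empirically (Section~5, where slopes of roughly $1.16$--$1.41$ are observed). Your framing of $\epsilon$ as the effective exponent governing the aggregate cost $\sum_i |L^{(i)}|$, rather than a quantity requiring an independent estimate, is exactly the intended reading.
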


We shall validate this assertion in Corollary 1 via empirical evaluation for various datasets in the next section by identifying that $\epsilon$ is indeed small, which implies that the complexity of \textsf{DeepNC-L} is almost linear in $|V_O|$.

\subsubsection{Complexity of \textsf{DeepNC-EM}}\label{sec:4b2}
We turn to examining the computational complexity of each EM step to finally analyze the overall complexity. In the E-step, we can parallelize both the Bernoulli trials for edge sampling and the operation that adds sampled edges to $G_O^{(t)}[i]$ in lines 5 and 6, respectively. Consequently, the computational complexity of each E-step is given by $\mathcal{O}(\Delta_s)$, where $\Delta_s$ is the number of samples in each E-step. The M-step is dominated by \textsf{DeepNC-L} as the function Filter$(\cdot,\cdot)$ can also be executed in parallel since all operations therein are performed independently of each other. Thus, the computational complexity of each M-step is given by $\mathcal{O}(\Delta_s |V_O|^{1+\epsilon})$. When the number of EM iterations  is given by $k_{\text{EM}}$, determined by the threshold $\eta$, and there are a total of $\Delta_s$ samples, the complexity of \textsf{DeepNC-EM} is finally given as $\Theta(k_{\text{EM}}\Delta_s|V_O|^{1+\epsilon})$ based on Corollary 1. Since both $k_\text{EM}$ and $\Delta_s$ are regarded as constants as in~\cite{kronem}, the total computational complexity scales as $\Theta(|V_O|^{1+\epsilon})$.

\section{Experimental Evaluation}\label{sec:5}

In this section, we first describe both synthetic and real-world datasets that we use in the evaluation. We also present three state-of-the-art methods for network completion as a comparison. After presenting a performance metric and our experimental settings, we intensively evaluate the performance of our \textsf{DeepNC} algorithms.

\subsection{Datasets}\label{sec:5a}
Two synthetic and three real-world datasets across various domains (e.g., social, citations, and biological networks) are used as a series of homogeneous networks (graphs), denoted by $G_I$, and described in sequence. For all experiments, we treat graphs as undirected and only consider the largest connected component without isolated nodes. The statistics of each dataset, including the number of similar graphs and the range of the number of nodes, is described in Table~\ref{tab:dataset}. In the following, we summarize important characteristics of the datasets.

{\bf Lancichinetti-Fortunato-Radicchi (LFR)}~\cite{lfr}. We generate synthetic graphs with the LFR model in which the degree exponent of a power-law distribution, the average degree, the minimum community size, the community size exponent, and the mixing parameter are set to 3, 5, 20, 1.5, and 0.1, respectively. Refer to the original paper~\cite{lfr} for a detailed description of these parameters.

{\bf Barabasi-Albert (B-A)}~\cite{data_ba}. We generate further synthetic graphs using the B-A model. The attachment parameter of the model is set in such a way that each newly added node is connected to four existing nodes, unless otherwise stated.

{\bf Protein}~\cite{protein}. The protein structure network is a biological network. Each protein is represented by a graph, in which nodes represent amino acids. Two nodes are connected
if they are less than 6 Angstroms apart.

{\bf Ego-CiteSeer}~\cite{citeseer}. This CiteSeer dataset is an online citation network and is a frequently used benchmark. Nodes and edges represent publications and citations, respectively.

{\bf Ego-Facebook}~\cite{data_facebook}. This Facebook dataset is a social friendship network extracted from Facebook. Nodes and edges represent people and friendship ties, respectively.

\begin{table}[t]
\caption{Statistics of 5 datasets, where NG and NN denote the number of similar graphs and the range of the number of nodes in each dataset, respectively, including training graphs $G_I$ and a test graph $G_T$. Here, k denotes $10^3$.}
\label{tab:dataset}
\centering
\begin{tabular}{|l|l|l|}
\hline
Name                                             & NG      & NN       \\ \hline
LFR & 500  & 1.6k--2k                                                                                 \\ \hline
B-A                   & 500  & 1.6k--2k                                                                       \\ \hline
Protein                  & 918  & 100--500                                                                         \\ \hline
Ego-CiteSeer                                        & 737   & 50--399    \\ \hline
Ego-Facebook                                          & 10  & 52--1,034  \\ \hline
\end{tabular}
\end{table}

\subsection{State-of-the-art Approaches}\label{sec:4b}

In this subsection, we present three state-of-the-art network completion approaches for comparison.

{\bf KronEM}~\cite{kronem}. This approach aims to infer the missing part of a true network based solely on the connectivity patterns in the observed part via a generative graph model based on Kronecker graphs, where the parameters are estimated via an EM algorithm.

{\bf EvoGraph}~\cite{evograph}. To solve the network completion problem, EvoGraph infers the missing nodes and edges in such a way that the topological properties of the observable network are preserved via an efficient preferential attachment mechanism.

{\bf A variant of GraphRNN-S}. As a na\"ive approach for network completion using deep generative models of graphs, we modify the inference process of the original GraphRNN-S~\cite{graphRNN} so that it can be used as a network completion method as follows. Under a random ordering of  observable nodes, we first obtain the sequence  \{${\bf S}^{\pi}_2,\cdots,{\bf S}^{\pi}_{|V_O|}$\} along with the observable entries from $G_O$. Then,  by invoking the inference process of GraphRNN-S, we generate $|V_M|$ missing nodes using trained functions $f_\text{trans}$ and $f_\text{out}$ based on \{${\bf S}^{\pi}_2,\cdots,{\bf S}^{\pi}_{|V_O|}$\}. This variant of GraphRNN-S for network completion is termed vGraphRNN in our evaluation.

\subsection{Performance Metric}\label{sec:5b}

To assess the performance of our proposed method and the above state-of-the-art approaches, we need to quantify the degree of agreement between the recovered graph and the original graph. To this end, we adopt the GED as a well-known performance metric.

\begin{definition}
\label{def:2}
{\bf Graph edit distance (GED)}~\cite{ged}. Given a set of graph edit operations, the GED between a recovered graph $\hat{G}$ and the true graph ${G}$  is defined as
\begin{equation} 
\text{\upshape GED}(\hat{G},{G})=\min_{(e_{1},...,e_{k})\in {\mathcal {P}}(\hat{G},{G})}\sum_{i=1}^{k}c(e_{i}),
\end{equation} 
where ${\mathcal {P}}(\hat{G},{G})$ denotes the set of edit paths transforming $\hat{G}$ into a graph isomorphic to ${G}$, and $c(e)\geq 0$ is the cost of each graph edit operation $e$. 
\end{definition}

Note that only four operations are allowed in our setup, including vertex substitution, edge insertion, edge deletion, and edge substitution, and $c(e)$ is identically set to 1 for all operations. Since the problem of computing the GED is NP-complete~\cite{gednphard}, we adopt an efficient approximation algorithm proposed in~\cite{approxgednew}. In our experiments, GED is normalized by the average size of the two graphs.

\subsection{Experimental Setup}\label{sec:setup}
We first describe the settings of the neural networks. In our experiments, the function $f_\text{trans}$ is implemented by using 4 layers of GRU cells with a 128-dimensional hidden state; and the function $f_\text{out}$ is implemented by using a two-layer perceptron with a 64-dimensional hidden state and a sigmoid activation function. The Adam optimizer~\cite{adamopt} is used for minibatch training with a learning rate of 0.001, where each minibatch contains 32 graph sequences. We train the model for 32,000 batches in all experiments. 

To test the performance of our method, we randomly select one graph from each dataset to act as the underlying true network $G_T$. From each dataset, we select all remaining similar graphs as training data $G_I$  unless otherwise stated.

To create a partially observable network from the true network $G_T$, we adopt the following two graph sampling strategies from~\cite{graphsampling}. The first strategy, called {\em random node} (RN) sampling, selects nodes uniformly at random to create a sample graph. The second strategy, {\em forest fire} (FF) sampling, starts by picking a seed node uniformly at random and adding it to a sample graph (referred to as burning). Then, FF sampling burns a fraction of the outgoing links with nodes attached to them. This process is repeated recursively for each neighbor that is burned until no new node is selected to be burned. Afterwards, we sample uniformly at random a portion of edges from the complete subgraph sampled from $G_T$ to finally acquire $G_O$.  In our experiments, the partially observable network $G_O$ is constructed by 90\% of edges in a complete subgraph consisting of 70\% of nodes sampled from $G_T$ unless otherwise specified. Each experimental result is averaged over 10 executions.

\subsection{Experimental Results}\label{sec:5c}
Our empirical study in this subsection is designed to answer the following five key research questions.
\begin{itemize}
\item {\em Q1}. How much does the performance of \textsf{DeepNC-EM} improve with respect to the number of EM iterations?
\item {\em Q2}. How much do the \textsf{DeepNC} algorithms improve the performance of network completion over the state-of-the-art approaches?
\item {\em Q3}. How beneficial are the \textsf{DeepNC} algorithms in more difficult situations where either a large number of nodes and edges are missing or the training data are also incomplete? 
\item {\em Q4}. How robust is \textsf{DeepNC-EM} to the portion of missing edges in $G_O$ in comparison with the other state-of-the-art approaches?
\item {\em Q5}. How scalable are \textsf{DeepNC} algorithms with the size of the graph?
\end{itemize}

To answer these questions, we carry out six comprehensive experiments as follows.

\subsubsection{Comparative Study Between \textsf{DeepNC-L} and \textsf{DeepNC-EM} (Q1)}
In Fig.~\ref{fig:exp1}, we show the performance of the \textsf{DeepNC-EM} algorithm proposed in Section~\ref{sec:4a2} with respect to GED according to the number of EM iterations using two synthetic datasets, i.e., the LFR and B-A models. As shown in Fig.~\ref{fig:exp1}, our findings are as follows:
\begin{itemize}
\item For both RN and FF sampling strategies, the GED of \textsf{DeepNC-EM} decreases as the number of EM iterations increases.
\item The number of EM iterations required to achieve a sufficiently low GED value is relatively small compared to the network size. This can be seen from the LFR dataset, where the performance improvement is marginal after four iterations.
\item We observe that \textsf{DeepNC-EM} exhibits less fluctuations over EM iterations on the LFR dataset. This might be caused by the fact that graphs generated using the LFR model are denser than those using the B-A model under our  setting, which enables the algorithm to be more likely  to correctly recover the edges connecting two nodes in the set $V_O$.
\end{itemize}
In the subsequent experiments, the number of EM iterations is set to 6.

\begin{figure}[t]
    \begin{center}
\hspace{-0.2cm}
	 \begin{subfigure}[]{0.485\columnwidth}
		\centering
            \includegraphics[height=3.5cm]{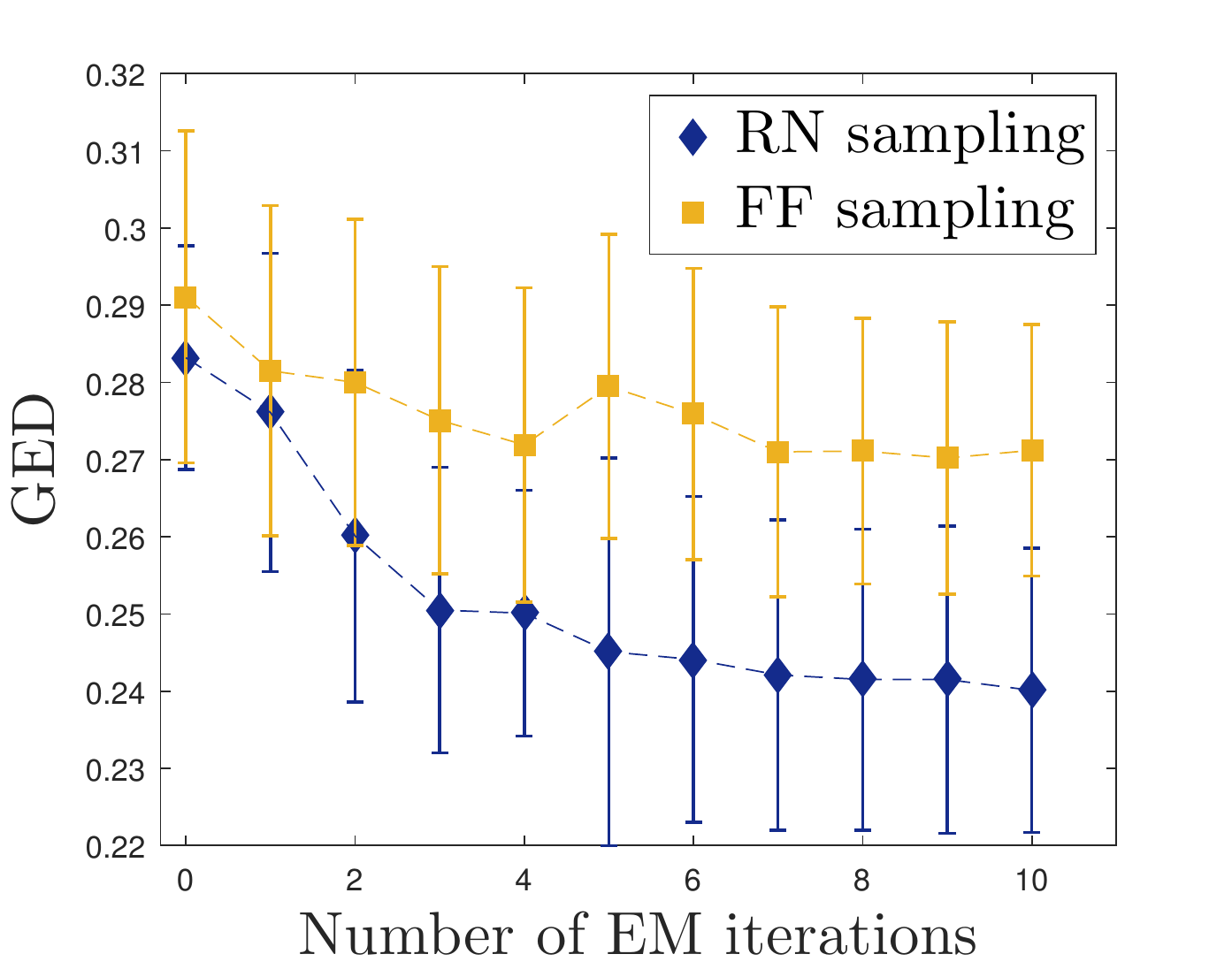}
            \caption{The LFR dataset}
            \label{fig:fig5a}
        \end{subfigure}
\hspace{-0.05cm}
\begin{subfigure}[]{0.485\columnwidth}
		\centering
            \includegraphics[height=3.5cm]{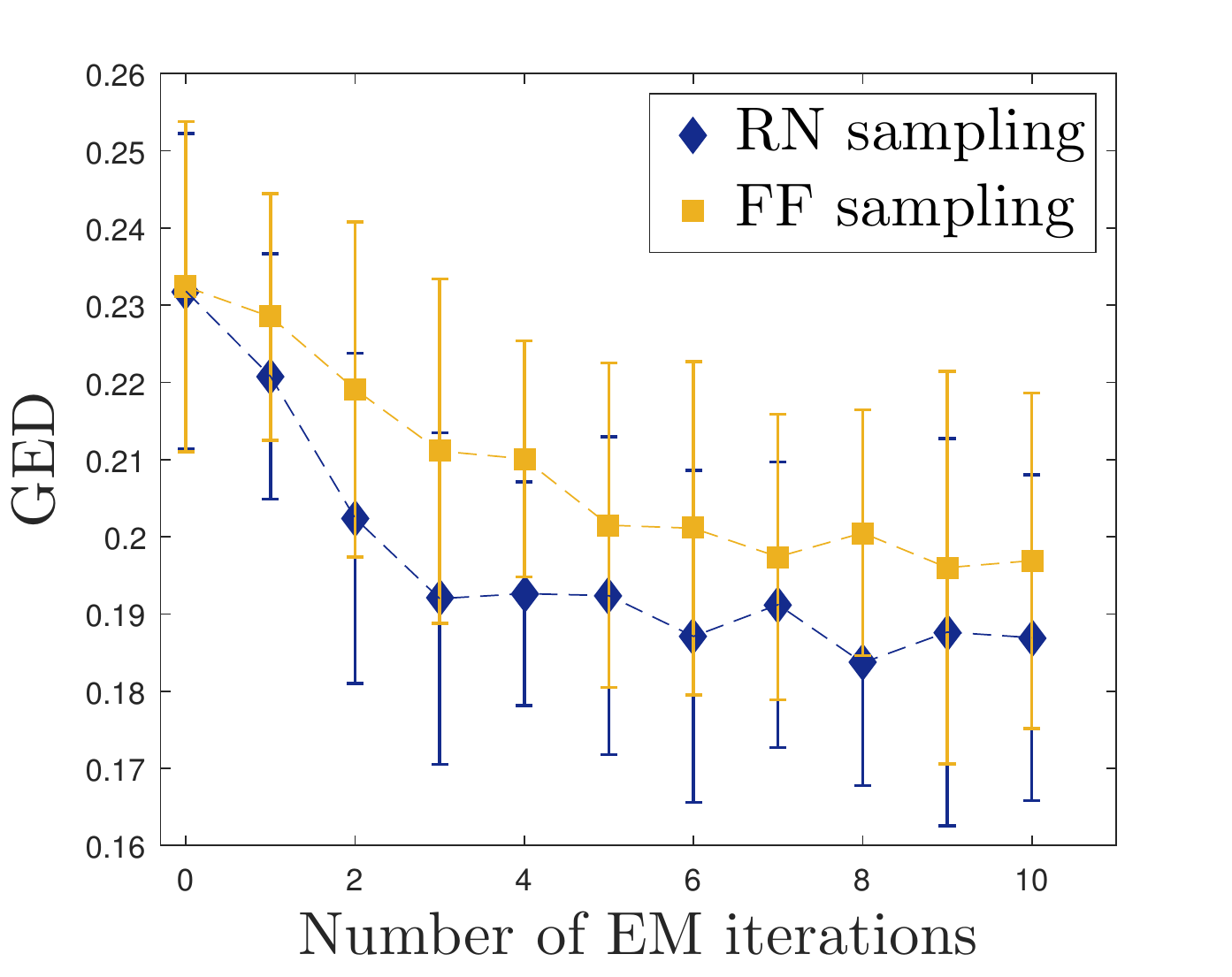}
            \caption{The B-A dataset}
            \label{fig:fig5b}
        \end{subfigure}
            \caption{GED of \textsf{DeepNC-EM} over the number of EM iterations. Here, the performance of \textsf{DeepNC-L} corresponds to the case where the number of EM iterations is zero. }
            \label{fig:exp1}
    \end{center}
\end{figure}


\begin{table*}[!h]
\centering
\caption{Performance comparison in terms of graph edit distance (average $\pm$ standard deviation). Here, the best method for each dataset is highlighted using bold fonts.}
\label{tab:resultGED}
\bgroup
\def\arraystretch{1.2}
\resizebox{\linewidth}{!}{%
\begin{tabular}{|l|l|l|l|l|l|l|l|l|l|l|}
\hline
\multirow{2}{*}{\backslashbox{Dataset}{Method}} & \multicolumn{1}{c|}{\multirow{2}{*}{\shortstack{{\bf DeepNC-EM}  \\ ($X$)}}} & \multicolumn{1}{c|}{\multirow{2}{*}{\shortstack{{\bf DeepNC-L}}}} &  \multicolumn{1}{c|}{\multirow{2}{*}{\shortstack{\textbf{vGraphRNN} \\ ($Y_1$)}}} & \multicolumn{1}{c|}{\multirow{2}{*}{\shortstack{\textbf{KronEM} \\ ($Y_2$)}}} & \multicolumn{1}{c|}{\multirow{2}{*}{\shortstack{\textbf{EvoGraph}\\ ($Y_3$)}}} & \multicolumn{3}{c|}{\textbf{Gain (\%)}}                                                                   
\\ \cline{7-9}
\multicolumn{1}{|c|}{} & \multicolumn{1}{c|}{} &  \multicolumn{1}{c|}{} & \multicolumn{1}{c|}{} & \multicolumn{1}{c|}{}  & \multicolumn{1}{c|}{} &\multicolumn{1}{c|}{$\frac{Y_1-X}{Y_1}\times100$} &\multicolumn{1}{c|}{$\frac{Y_2-X}{Y_2}\times100$}  &\multicolumn{1}{c|}{$\frac{Y_3-X}{Y_3}\times100$}
\\
\hline
LFR (RN)                                            
& {\bf 0.2793}  $\pm$ 0.0145  & 0.2864  $\pm$ 0.0206  & 0.3099 $\pm$ 0.0241 & 0.3713 $\pm$ 0.0428 & 0.5126 $\pm$ 0.0124 & 9.87 & 24.78  & 45.51\\
LFR (FF)                                           
& {\bf 0.2612}  $\pm$ 0.0205  & 0.2801  $\pm$ 0.0214  & 0.3155 $\pm$ 0.0197 & 0.3671 $\pm$ 0.0278 & 0.4512 $\pm$ 0.0075 & 17.21 & 28.85  & 42.11
\\
B-A (RN)                                            
& {\bf 0.1782}  $\pm$ 0.0120  & 0.1888  $\pm$ 0.0104  & 0.2015 $\pm$ 0.0210 & 0.3921 $\pm$ 0.0304 & 0.5612 $\pm$ 0.0084 & 11.56 & 54.55  & 68.25
\\
B-A (FF) 
& {\bf 0.1811}  $\pm$ 0.0106  & 0.2024  $\pm$ 0.0134  & 0.2041 $\pm$ 0.0202 & 0.3706 $\pm$ 0.0418 & 0.5455 $\pm$ 0.0087 & 11.27 & 51.13  & 66.80
\\
Protein (RN)                                            
& {\bf 0.2616}  $\pm$ 0.0521  & 0.3015  $\pm$ 0.0520  & 0.3861 $\pm$ 0.2101 & 0.4565 $\pm$ 0.1077 & 0.4422 $\pm$ 0.0014 & 32.25 & 42.69  & 40.84
\\
Protein (FF)                                            
& {\bf 0.2603}  $\pm$ 0.0571  & 0.3012  $\pm$ 0.0481  & 0.3761 $\pm$ 0.1121 & 0.4455 $\pm$ 0.1240 & 0.4111 $\pm$ 0.0025 & 30.79 & 41.57  & 36.68
\\
Ego-CiteSeer (RN)                                            
& {\bf 0.3012}  $\pm$ 0.0414  & 0.3236  $\pm$ 0.0414  & 0.4915 $\pm$ 0.2514 & 0.5811 $\pm$ 0.0438 & 0.9166 $\pm$ 0.0109 & 39.16 & 48.17  & 67.14
\\
Ego-CiteSeer (FF)                                            
& {\bf 0.3241}  $\pm$ 0.0571  & 0.3458  $\pm$ 0.0511  & 0.5416 $\pm$ 0.1918 & 0.5571 $\pm$ 0.0518 & 0.9013 $\pm$ 0.0041 & 40.16 & 41.82  & 64.04
\\
Ego-Facebook (RN)                                            
& {\bf 0.4213}  $\pm$ 0.0502  & 0.4535  $\pm$ 0.0508  & 0.5928 $\pm$ 0.2015 & 0.6167 $\pm$ 0.0268 & 0.8161 $\pm$ 0.0121 & 28.93 & 31.68  & 48.38  
\\
Ego-Facebook (FF)                                            
& {\bf 0.4711}  $\pm$ 0.0471  & 0.5021  $\pm$ 0.0604  & 0.6182 $\pm$ 0.1897 & 0.6160 $\pm$ 0.0447 & 0.7222 $\pm$ 0.0104 & 23.79 & 23.52  & 34.77
\\
\hline
\end{tabular}
}
\egroup
\end{table*}
\subsubsection{Comparison With State-of-the-Art Approaches (Q2)}
The performance comparison between two \textsf{DeepNC} algorithms and three state-of-the-art network completion methods, including vGraphRNN, KronEM~\cite{kronem}, and EvoGraph~\cite{evograph}, with respect to GED is presented in Table~\ref{tab:resultGED} for all five datasets. We note that \textsf{DeepNC-EM}, \textsf{DeepNC-L}, and vGraphRNN use structurally similar graphs as training data $G_I$; meanwhile, both KronEM and EvoGraph operate based solely on the partially observable graph $G_O$  without any training phase.
We observe the following:

\begin{itemize}
\item The improvement rates of \textsf{DeepNC-EM} over vGraphRNN, KronEM, and EvoGraph are up to 40.16\%, 54.55\%, and 68.25\%, respectively. These maximum gains are achieved for the Ego-CiteSeer and B-A datasets.
\item The \textsf{DeepNC-L} and \textsf{DeepNC-EM} algorithms are insensitive to sampling strategies for creating a partially observable network, whereas the performance of EvoGraph depends on the sampling strategy. Specifically, sampling via FF results in better performance than that via RN sampling when EvoGraph is used due to the fact that the FF sampling strategy tends to preserve the network properties such as the degree distribution~\cite{graphsampling}. In reality, if the sampling strategy is unknown and one only acquires randomly sampled data, then graph upscaling methods such as EvoGraph would certainly perform poorly. 
This result displays the robustness of our \textsf{DeepNC} algorithms to graph samplings.
\item Even with deletions of only 10\% of edges, the additional gain of \textsf{DeepNC-EM} over \textsf{DeepNC-L} is still significant for all datasets. The maximum improvement rate of 13.58\% is achieved on the Protein dataset.
\item In a comparison of the performance differences between KronEM and EvoGraph, KronEM performs better in most cases. However, KronEM is inferior to EvoGraph in the case where the degree distribution of a network does not strictly follow the pure power-law degree distribution. EvoGraph consistently outperforms KronEM in the Protein dataset. 
\item The standard deviation of GED is relatively high when vGraphRNN is employed (e.g., 0.2514 for the Ego-CiteSeer dataset), which demonstrates that a random node ordering of observable nodes for network completion does not guarantee a stable solution.
\end{itemize}
Consequently, \textsf{DeepNC-EM} consistently outperforms all state-of-the-art methods for all synthetic and real-world datasets, which reveals the robustness of our method toward diverse network topologies.

\begin{table*}[t]
\centering
\caption{Performance comparison in terms of graph edit distance when 70\% of nodes are missing (average $\pm$ standard deviation). Here, the best method for each dataset is highlighted using bold fonts.}
\label{tab:resultGED2}
\bgroup
\def\arraystretch{1.2}
\resizebox{\linewidth}{!}{%
\begin{tabular}{|l|l|l|l|l|l|l|l|l|l|l|}
\hline
\multirow{2}{*}{\backslashbox{Dataset}{Method}} & \multicolumn{1}{c|}{\multirow{2}{*}{\shortstack{{\bf DeepNC-EM}  \\ ($X$)}}} & \multicolumn{1}{c|}{\multirow{2}{*}{\shortstack{{\bf DeepNC-L}}}} &  \multicolumn{1}{c|}{\multirow{2}{*}{\shortstack{\textbf{vGraphRNN} \\ ($Y_1$)}}} & \multicolumn{1}{c|}{\multirow{2}{*}{\shortstack{\textbf{KronEM} \\ ($Y_2$)}}} & \multicolumn{1}{c|}{\multirow{2}{*}{\shortstack{\textbf{EvoGraph}\\ ($Y_3$)}}} & \multicolumn{3}{c|}{\textbf{Gain (\%)}}                                                                   
\\ \cline{7-9}
\multicolumn{1}{|c|}{} & \multicolumn{1}{c|}{} &  \multicolumn{1}{c|}{} & \multicolumn{1}{c|}{} & \multicolumn{1}{c|}{}  & \multicolumn{1}{c|}{} &\multicolumn{1}{c|}{$\frac{Y_1-X}{Y_1}\times100$} &\multicolumn{1}{c|}{$\frac{Y_2-X}{Y_2}\times100$}  &\multicolumn{1}{c|}{$\frac{Y_3-X}{Y_3}\times100$}
\\
\hline
LFR                                            
& {\bf 0.2902}  $\pm$ 0.1204  & 0.3251  $\pm$ 0.1245  & 0.3516 $\pm$ 0.1284 & 0.6167 $\pm$ 0.0802 & 0.7177 $\pm$ 0.0212 & 17.46 & 52.94  & 59.57
\\
B-A                                            
& {\bf 0.2611}  $\pm$ 0.1021  & 0.2635  $\pm$ 0.1018  & 0.2644 $\pm$ 0.1487 & 0.6547 $\pm$ 0.0728 & 0.8273 $\pm$ 0.0140 & 1.25 & 60.12  & 68.44
\\
Protein                                          
& {\bf 0.3244}  $\pm$ 0.1014  & 0.3648  $\pm$ 0.1189  & 0.4678 $\pm$ 0.2428 & 0.9674 $\pm$ 0.0437 & 0.7272 $\pm$ 0.0161 & 30.65 & 66.47  & 55.39
\\
Ego-CiteSeer                                           
& {\bf 0.3414}  $\pm$ 0.1144  & 0.3988  $\pm$ 0.1171  & 0.6031 $\pm$ 0.3125 & 0.7727 $\pm$ 0.0578 & 0.9161 $\pm$ 0.0116 & 43.39 & 55.82  & 62.73
\\
Ego-Facebook                                        
& {\bf 0.5685}  $\pm$ 0.1412  & 0.5875  $\pm$ 0.1280  & 0.6448 $\pm$ 0.2985 & 0.8027 $\pm$ 0.0689 & 0.9505 $\pm$ 0.1057 & 11.83 & 29.18  & 40.19  
\\
\hline
\end{tabular}
}
\egroup
\end{table*}

\subsubsection{Applicability to Fringe Scenarios (Q3)}
\label{sec:553}
We now compare our \textsf{DeepNC} algorithms to the three state-of-the-art network completion methods in more difficult settings that often occur in real environments: 1) the case in which a large portion of nodes are missing and 2) the case in which training graphs are also only partially observed. In these experiments, we only show the results for the RN sampling strategy since the results from FF sampling follow similar trends.

First, we create a partially observable network $G_O$ consisting of only 30\% of nodes from the underlying true graph ${G}_T$ via sampling. The performance comparison between the \textsf{DeepNC} algorithms and the three state-of-the-art methods with respect to GED is presented in Table~\ref{tab:resultGED2} for all five datasets. As shown in Tables \ref{tab:resultGED} and \ref{tab:resultGED2}, a large number of missing nodes and edges result in significant performance degradation for KronEM and EvoGraph, while \textsf{DeepNC-EM}, \textsf{DeepNC-L}, and vGraphRNN are more robust as the latter three methods take advantage of the topological information from similar graphs  (i.e., training data) to infer the missing part.

Next, we perform RN sampling so that only a part of nodes in the training graphs is observable. In Fig.~\ref{fig:partially}, we compare the GED of the two \textsf{DeepNC} algorithms and the three state-of-the-art methods, where the degree of observability in training graphs is set to $\{95,90\}$\%.
We find that the \textsf{DeepNC} algorithms still outperform the state-of-the-art methods on all datasets with the exception of the Ego-Facebook dataset, where the performance of \textsf{DeepNC-L} is slightly inferior to that of KronEM when 90\% of nodes in training graphs are observable. 

\begin{figure}[t]

	 \begin{subfigure}[]{\columnwidth}
\hspace{-0.5cm}
            \includegraphics[width=10.2cm]{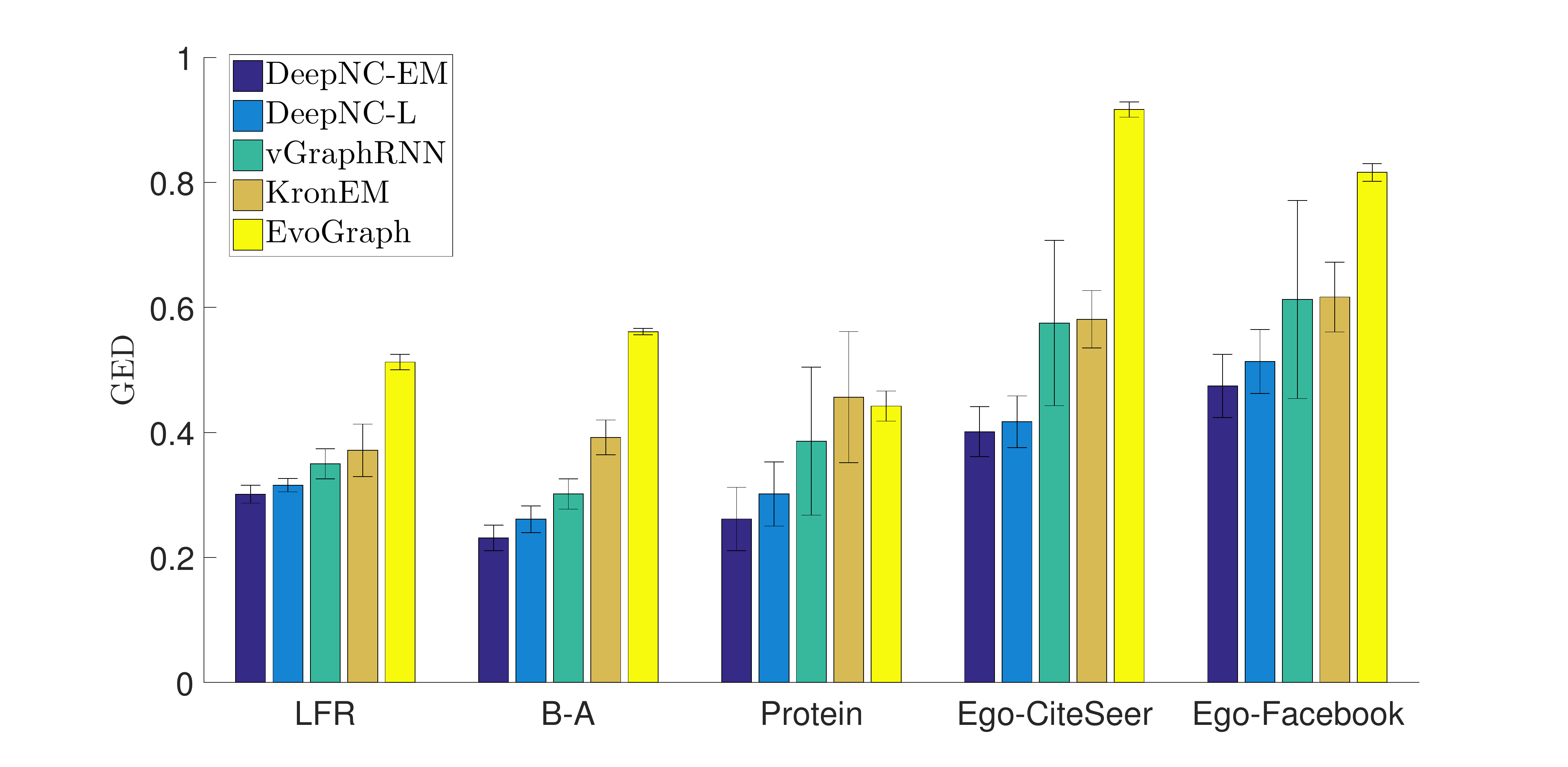}
            \caption{95\% observability}
            \label{fig:fig8a}
        \end{subfigure}
\begin{subfigure}[]{\columnwidth}
\hspace{-0.5cm}
            \includegraphics[width=10.2cm]{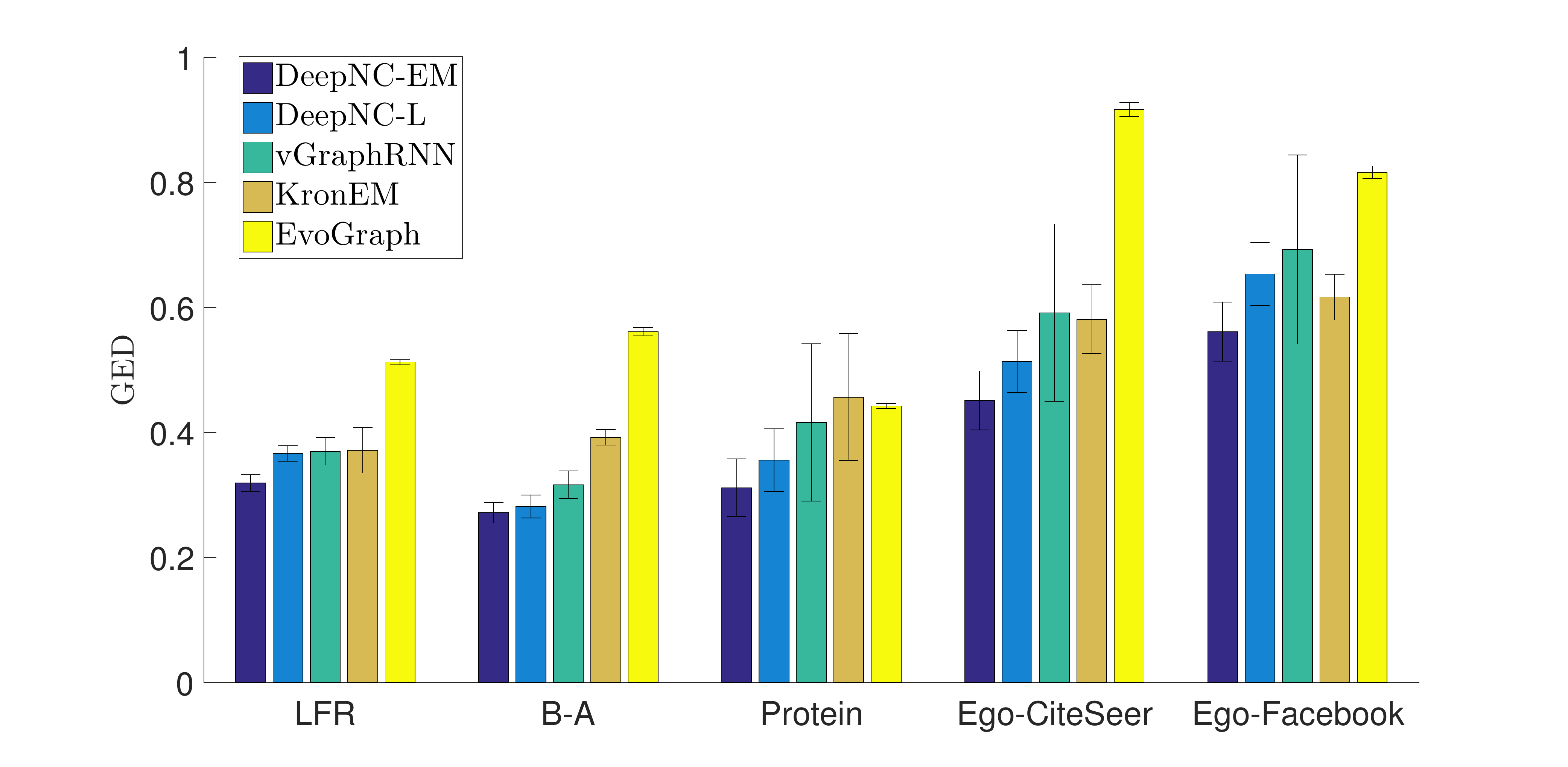}
            \caption{90\% observability}
            \label{fig:fig8b}
        \end{subfigure}
\caption{Performance comparison in terms of GED (the lower the better), where the degree of observability in training graphs is set to $\{95, 90\}$\%.}
\label{fig:partially}
\end{figure}

\subsubsection{Robustness  to the Degree of Edge Observability in $G_O$ (Q4)}
We evaluate the GED performance in the second fringe scenario, in which a partially observable network $G_O$ is created by deleting a large portion of edges uniformly at random from a complete subgraph that consists of 70\% of nodes sampled from $G_T$. In Fig.~\ref{fig:missingportion}, the performance of the \textsf{DeepNC} algorithms is compared to the state-of-the-art network completion methods using two synthetic datasets, where the fraction of missing edges is set to $\{10,15,20\}\%$. Our main findings are: 1) \textsf{DeepNC-L} outperforms the three state-of-the-art methods in all cases; 2) the gain of \textsf{DeepNC-EM} over \textsf{DeepNC-L} is more substantial when the LFR dataset is used since missing edges are inferred more accurately; and 3) both \textsf{DeepNC} algorithms exhibit less performance degradation as the number of missing edges increases, which demonstrates the robustness of our method for various degrees of edge observability. 

\begin{figure}[t]
\centering
	 \begin{subfigure}[]{\columnwidth}
		\centering
            \includegraphics[width=8cm]{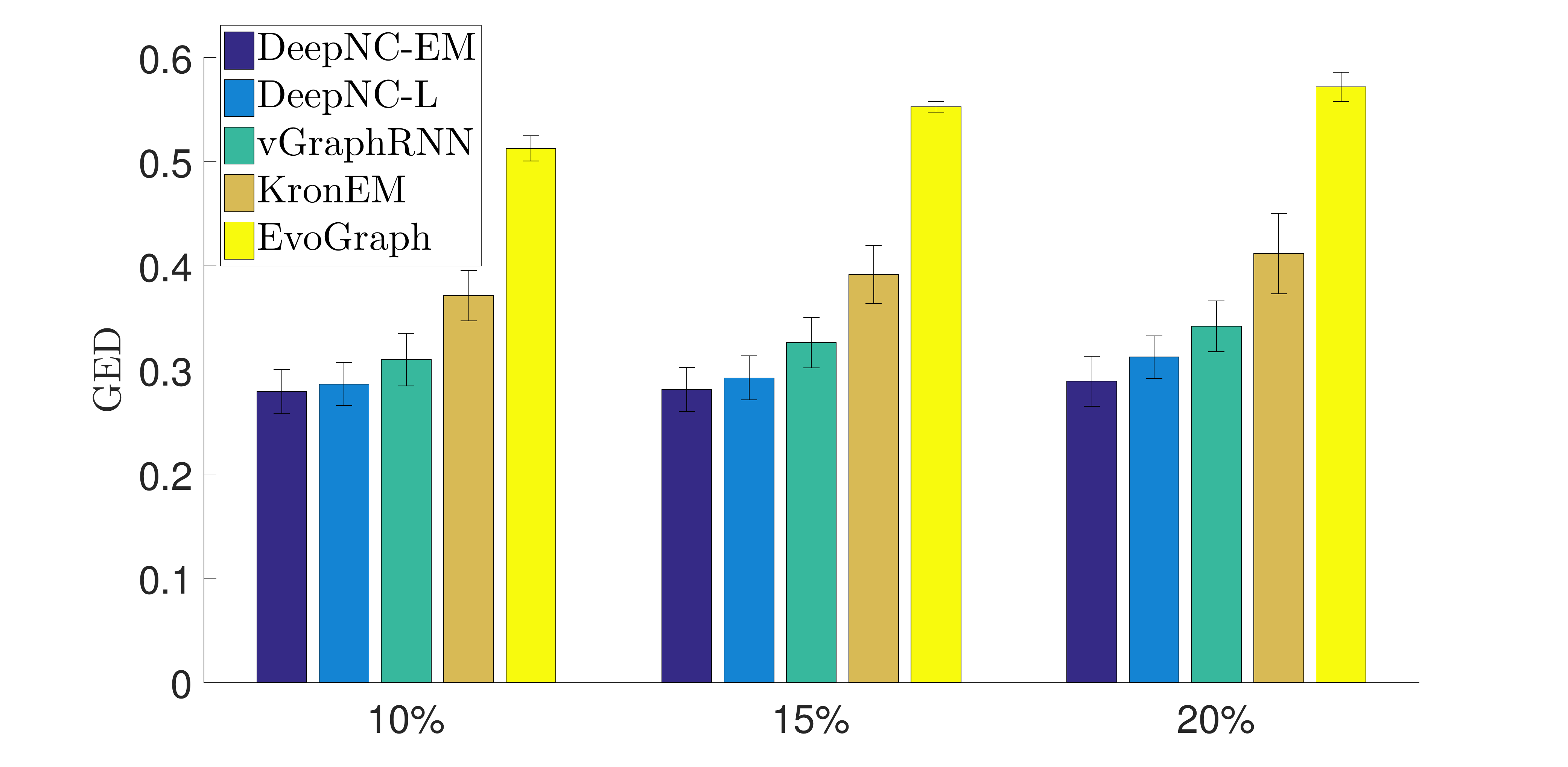}
            \caption{The LFR dataset}
            \label{fig:fig9a}
        \end{subfigure}
\hspace{-0.05cm}
\begin{subfigure}[]{\columnwidth}
		\centering
            \includegraphics[width=8cm]{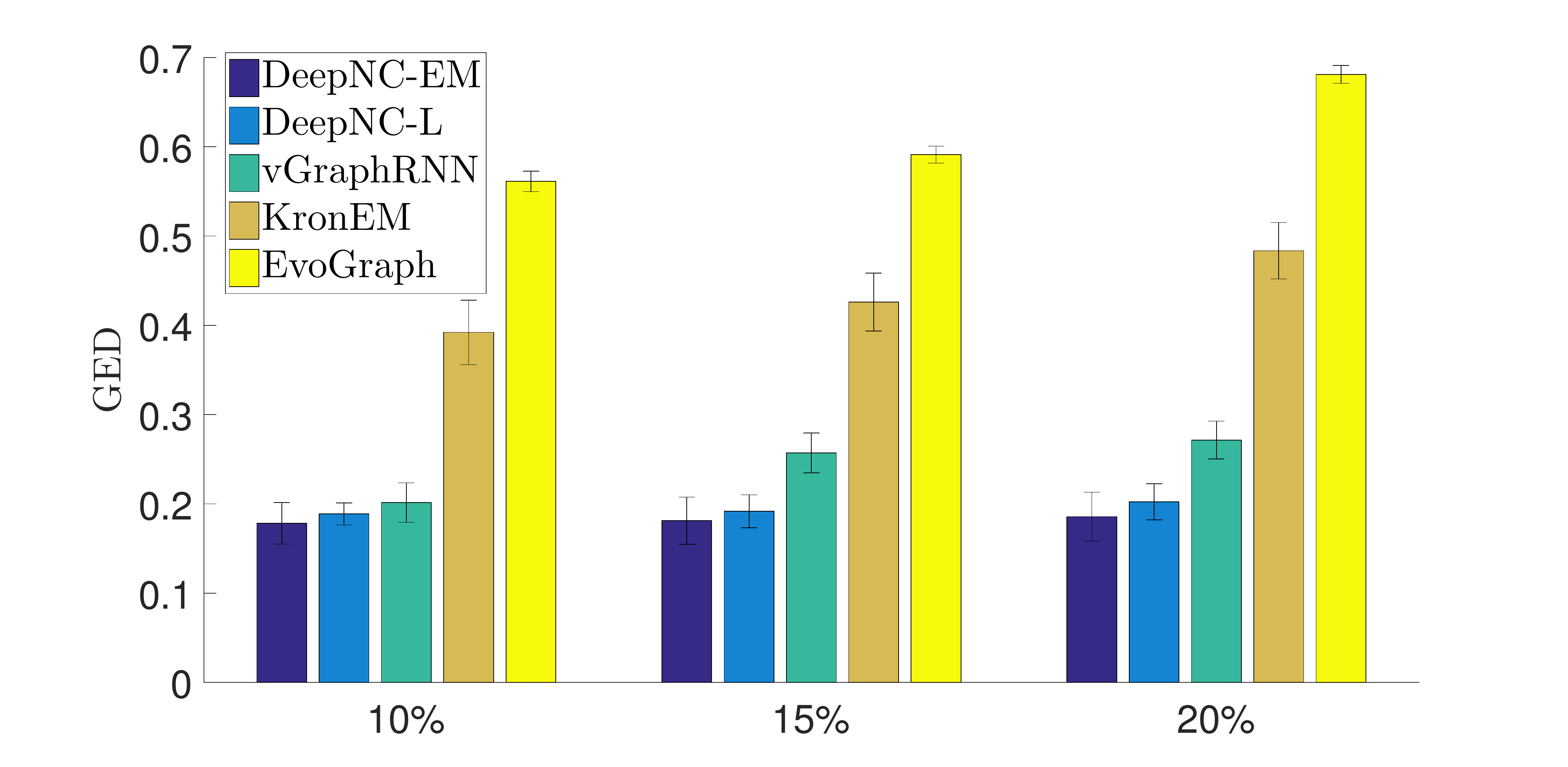}
            \caption{The B-A dataset}
            \label{fig:fig9b}
        \end{subfigure}
\caption{Performance comparison in terms of GED (the lower the better), where the degree of missingness in edges between nodes in $G_O$ is set to $\{10,15,20\}\%$.}
\label{fig:missingportion}
\end{figure}

From Tables \ref{tab:resultGED}--\ref{tab:resultGED2} and Figs.~\ref{fig:partially}--\ref{fig:missingportion}, it is worth noting that the proposed \textsf{DeepNC-EM} algorithm outperforms all state-of-the-art methods for all types of datasets under various fringe scenarios and experimental settings.

\subsubsection{Scalability (Q5)}
Finally, we empirically show the average runtime complexity via experiments using the three sets of B-A synthetic graphs, which can conveniently be scaled up while preserving the same structural properties, where the number of connections from each new node to existing nodes is set to $c \in \{2, 4, 8\}$. In these experiments, we focus on evaluating the complexity of \textsf{DeepNC-EM} since EM iterations take constant time by executing \textsf{DeepNC-L} for each iteration. In each set of graphs, the number of  nodes, $|V_O|+|V_M|$, varies from 200 to 2,000 in increments of 200; and 30\% of nodes and their associated edges are deleted by RN sampling to create partially observable networks. Other parameter settings follow those in Section~\ref{sec:setup}. In Fig.~\ref{fig:complexity}, we illustrate the log-log plot of the execution time in seconds versus $|V_O|$, where each point represents the average complexity over 10 executions of \textsf{DeepNC-EM}. In the figure, dotted lines are also shown from the analytical result  with a proper bias, showing a tendency that  slopes of the lines for $c\in\{2,4,8\}$ are approximately given by 1.16, 1.26, and 1.41, respectively. This indicates that the computational complexity of \textsf{DeepNC-EM} is dependent on the average degree in a given graph. Moreover, we find that an almost linear complexity in $|V_O|$, i.e., $\Theta(|V_O|^{1+\epsilon})$ for a small $\epsilon>0$, is attainable since the slopes are at most 1.41 even for the relatively dense graph corresponding to $c=8$.

\begin{figure}[t]
\centering
\includegraphics[width=8cm]{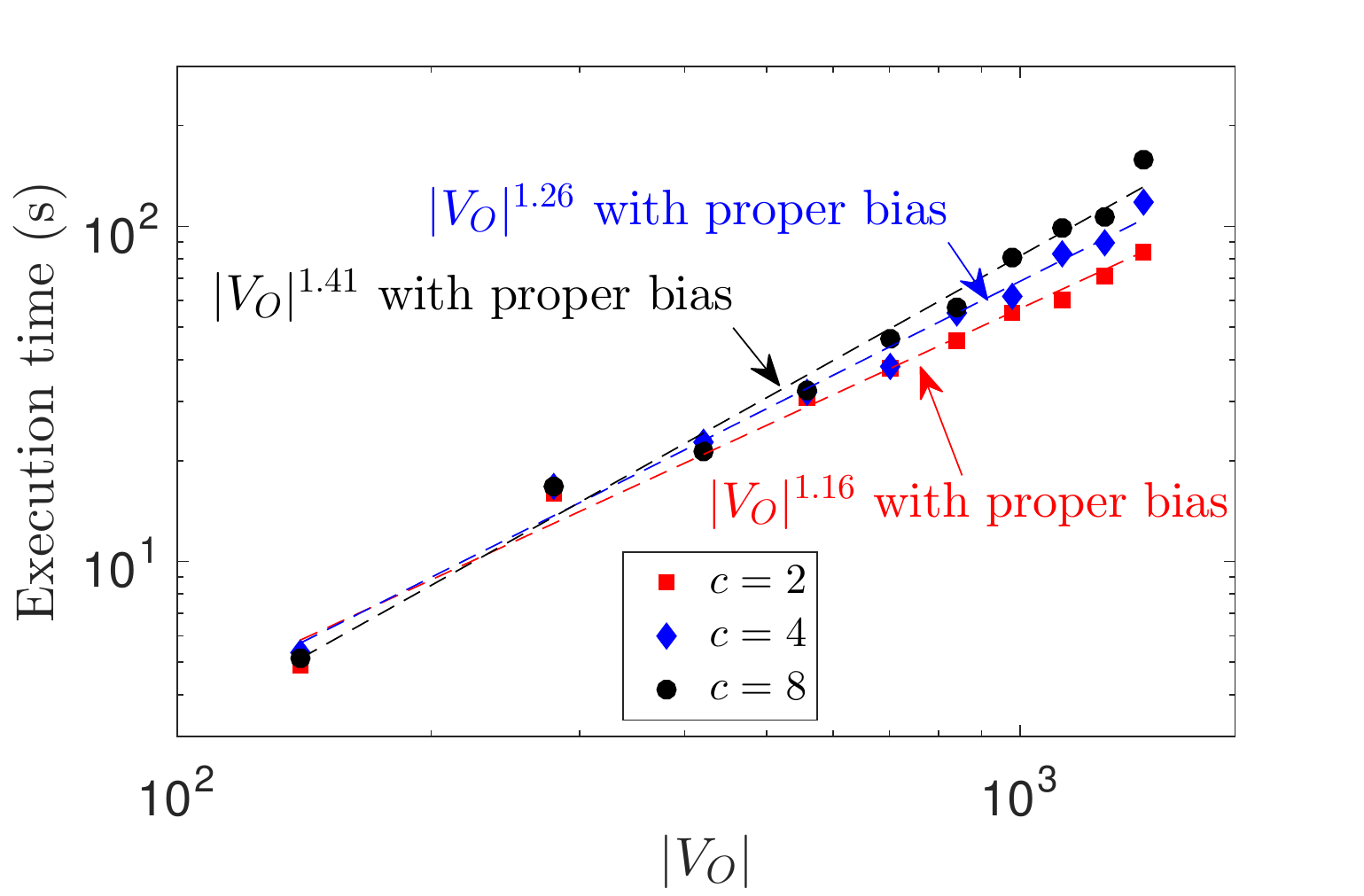}
\caption{The computational complexity of \textsf{DeepNC-EM}, where the log-log plot of the execution time versus $|V_O|$ is shown.}
\label{fig:complexity}
\end{figure}

\section{Concluding Remarks}\label{sec:6}

In this paper, we explored the open problem of recovering not only missing edges between observable nodes but also entirely hidden nodes and associated edges of an underlying true network. To tackle this new challenge, we introduced a novel method, termed \textsf{DeepNC}, that infers such missing nodes and edges via deep learning. Specifically, we presented an approach to first learning a likelihood over edges via an RNN-based generative graph model by using structurally similar graphs as training data and then inferring the missing parts of the network by applying an imputation strategy that restores the missing data. Furthermore, we proposed two \textsf{DeepNC} algorithms whose runtime complexities are almost linear in $|V_O|$. Using various synthetic and real-world datasets, we demonstrated that our \textsf{DeepNC} algorithms not only remarkably outperform vGraphRNN, KronEM, and EvoGraph methods, but are also robust to many difficult and challenging situations that often occur in real environments such as 1) a significant  portion of unobservable nodes, 2) training graphs that are only partially observable, or 3) a large fraction of missing edges between nodes in the observed network. Additionally, we analytically and empirically showed the scalability of our \textsf{DeepNC} algorithms.

Potential avenues of future research include the design of a unified framework for improving the performance of various downstream mining and learning tasks such as multi-label node classification, community detection, and influence maximization when \textsf{DeepNC} is adopted in partially observable networks. Here, the challenges lie in task-specific preprocessing that should be accompanied by network completion to guarantee satisfactory performance in each individual task.

\section*{Acknowledgments}
This research was supported by the Republic of Korea’s MSIT (Ministry of Science and ICT), under the High-Potential Individuals Global Training Program (No. 2020-0-01463) supervised by the IITP (Institute of Information and Communications Technology Planning Evaluation), by a grant of the Korea Health Technology R\&D Project through the Korea Health Industry Development Institute (KHIDI), funded by the Ministry of Health \& Welfare, Republic of Korea (HI20C0127), and by the Yonsei University Research Fund of 2020 (2020-22-0101).

\bibliography{TPAMIBib} 
\bibliographystyle{IEEEtran}

\end{document}